\newtheorem{theorem}{Theorem}[section]
\newtheorem{corollary}[theorem]{Corollary}
\newtheorem{lemma}[theorem]{Lemma}
\theoremstyle{definition}
\newtheorem{definition}[theorem]{Definition}
\newcommand{\rr}{\mathds{R}}
\title{Fair distributions for more participants than allocations}
\subjclass{91B32}
\author[Sober\'on]{Pablo Sober\'on}\address{Baruch College, City University of New York, New York, NY 10010} 
\email{pablo.soberon-bravo@baruch.cuny.edu}
\thanks{Sober\'on's research is supported by NSF grant DMS 2054419 and a PSC-CUNY TRADB52 award.}
\begin{document}

\begin{abstract}
 We study the existence of fair distributions when we have more guests than pieces to allocate, focusing on envy-free distributions among those who receive a piece.  The conditions on the demand from the guests can be weakened from those of classic cake-cutting and rent-splitting results of Stromquist, Woodall, and Su.  We extend existing variations of the cake-cutting problem with secretive guests and those that resist the removal of any sufficiently small set of guests.
\end{abstract}

\maketitle

\section{Introduction}

Determining the existence of fair or envy-free distributions of goods is a fundamental problem in mathematical economics.  One classic variant is known as the cake-cutting problem.  We have $k$ guests. We aim to divide a cake, represented by the interval $[0,1]$, into $k$ intervals and give one to each guest.  Given a partition, each guest knows which pieces are their favorites.  Cake-cutting problems seek conditions on the guests' subjective preferences to guarantee the existence of a partition and a distribution so that each guest receives one of their favorite pieces.  The appeal of these problems follows from their applications and from their connection to topological combinatorics.

Stromquist and Woodall proved such results with mild conditions \cites{Stromquist1980, Woodall1980}.  Each guest must always prefer a piece of positive length to a piece of length zero (which we call the ``hungry guest'' condition), and the preferences must be closed:  given a sequence of converging partitions, if the $i$-th guest always prefers the $j$-th part, then they must do so in the limit.

Many variations of the cake-cutting problem have been studied \cites{Weller:1985kj, Brams:1996wt, Barbanel2005, procaccia2015cake}, including partitions of multiple cakes or pieces per guest \cites{Cloutier2010, Nyman2020, Aharoni2020, SegalHalevi2021}, partitions with secretive guests \cite{Asada:2018ix}, and partitions weakening the hungry condition if there are pieces of length zero \cites{Meunier2019, Avvakumov2021}.

We study the existence of envy-free divisions if there are more guests than cake pieces.  The number of pieces we cut the cake into, $k$, will be fixed throughout the paper.  We exhibit weak conditions, in which a guest may reject every proposed piece of cake, that still guarantee envy-free distributions.  For positive integers $\alpha \le n$, we say that $n$ guests are \textit{$\alpha$-hungry} if the following two conditions are met:
\begin{itemize}
    \item for any partition of $[0,1]$ into $k$ intervals and any $n-\alpha+1$ of the guests, there is at least one guest who prefers a part of positive length and
    \item for any guest, their set of preferences is closed.
\end{itemize}

\begin{theorem}\label{thm:main}
Let $k \le n$ be positive integers.  We aim to split the interval $[0,1]$ among $k$ out of $n$ guests.  Assume that the $n$ guests are $k$-hungry.  Then, there exists a partition of $[0,1]$ into $k$ intervals that can be distributed among $k$ different guests so that each receives one of their favorite pieces.
\end{theorem}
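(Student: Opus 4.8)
The plan is to translate the problem into a Sperner-type statement on the simplex of partitions and let the $k$-hungry hypothesis supply the boundary conditions. Identify a partition of $[0,1]$ into $k$ intervals with its length vector $x=(x_1,\dots,x_k)$, so that partitions are parametrised by $\Delta^{k-1}=\{x\in\R^k_{\ge 0}:\sum_i x_i=1\}$; write $\operatorname{supp}(x)=\{i:x_i>0\}$ for the set of pieces of positive length. For a guest $g$ and a partition $x$ let $S_g(x)\subseteq[k]$ be the nonempty set of pieces $g$ likes, and set $A_{g,j}=\{x\in\Delta^{k-1}:j\in S_g(x)\}$; closedness of preferences means each $A_{g,j}$ is closed, and $\bigcup_j A_{g,j}=\Delta^{k-1}$. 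The goal is a partition $x$ together with an injection $\varphi\colon[k]\to[n]$ such that $x\in A_{\varphi(j),j}$ for every $j$. The set of such good partitions equals $\bigcup_{\varphi\text{ inj.}}\bigcap_j A_{\varphi(j),j}$, a finite union of finite intersections of closed sets and hence closed; since $\Delta^{k-1}$ is compact it suffices to produce, for every $\varepsilon>0$, a partition $x$, points $x_1,\dots,x_k$ within $\varepsilon$ of $x$, and an injection $\varphi$ with $x_s\in A_{\varphi(s),s}$ for each $s$, and then let $\varepsilon\to 0$ while passing to a subsequence on which $\varphi$ is constant and $x$ converges, using compactness and closedness of the $A_{g,j}$.

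The hypothesis enters through a reformulation: call a guest $g$ \emph{hungry at $x$} if $S_g(x)\cap\operatorname{supp}(x)\neq\emptyset$; the first clause of ``$k$-hungry'' says precisely that at every partition $x$ at least $k$ of the $n$ guests are hungry, since among any $n-k+1$ guests one must be. Now fix $\varepsilon>0$, triangulate $\Delta^{k-1}$ with mesh $<\varepsilon$ and with $\partial\Delta^{k-1}$ a subcomplex, and suppose one can equip the vertex set with an \emph{owner map} $o$ into $[n]$ satisfying (A) $o$ is injective on every $(k-1)$-dimensional simplex, and (B) $o(v)$ is hungry at $v$ for every vertex $v\in\partial\Delta^{k-1}$. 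Given such $o$, label each vertex by some $\ell(v)\in S_{o(v)}(v)$ chosen inside $\operatorname{supp}(v)$; this is possible trivially for interior vertices and by (B) on the boundary, and since every vertex lies in the face of $\Delta^{k-1}$ spanned by $\operatorname{supp}(v)$, the map $\ell$ is a Sperner labelling. Sperner's lemma then gives a $(k-1)$-simplex $\tau$ carrying all $k$ labels, say $\ell(v_s)=s$; the guests $\varphi(s):=o(v_s)$ are distinct by (A), we have $v_s\in A_{\varphi(s),s}$, and the $v_s$ lie within $\varepsilon$ of one another, which is exactly the approximate configuration needed above.

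So everything reduces to constructing the owner map, which I expect to be the main obstacle. When $n=k$ the $k$-hungry condition degenerates to the classical hungry condition, every guest is hungry everywhere, (B) is vacuous, and one uses the classical choice: take an iterated barycentric subdivision of $\Delta^{k-1}$ and let $o$ be the dimension colouring (a vertex of $\mathrm{sd}^m$ is a simplex of $\mathrm{sd}^{m-1}$, coloured by its dimension), which is injective on every maximal simplex. For $n>k$ the room to arrange (B) comes from the reformulated hypothesis: at each point of $\partial\Delta^{k-1}$ there are at least $k$ hungry guests, whereas a given vertex of a maximal simplex need only avoid the at most $k-1$ owners of the other vertices of that simplex. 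I would therefore build $o$ on the boundary first, processing the proper faces of $\Delta^{k-1}$ in increasing dimension and choosing for each new boundary vertex $v$ an owner hungry at $v$ and distinct from the owners already assigned within any maximal simplex meeting $v$, then extend $o$ over the interior vertices, where there is no hungriness constraint. Making these greedy choices mutually consistent --- so that a vertex lying in many maximal simplices gets a single owner compatible with all of them --- is the delicate point; I expect one has to pin down the combinatorics of the triangulation first (an iterated barycentric subdivision, so the conflict graph on vertices has clique number $k$, together with a convenient vertex ordering) and only then perform the colouring.

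There is also an equivalent Hall-theoretic packaging that may be cleaner to write and clarifies the relation to Stromquist--Woodall. At a partition $x$ a good injection exists if and only if the bipartite preference graph between guests and pieces has a matching saturating $[k]$, which by Hall's theorem fails exactly when $|\{g:S_g(x)\cap T\neq\emptyset\}|<|T|$ for some $T\subseteq[k]$. Let $B_T\subseteq\Delta^{k-1}$ be the set of partitions where this happens for $T$; each $B_T$ is open by closedness of preferences, and the theorem is equivalent to: the family $\{B_T\}_{\emptyset\neq T\subseteq[k]}$ does not cover $\Delta^{k-1}$. Two properties of this family encode the hypotheses. First, $x\in B_T$ forces $\operatorname{supp}(x)\not\subseteq T$: otherwise the at least $k$ guests hungry at $x$ all like a piece of $\operatorname{supp}(x)\subseteq T$, so $|\{g:S_g(x)\cap T\neq\emptyset\}|\ge k\ge|T|$, contradicting $x\in B_T$. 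Second, for every partition of $[k]$ into nonempty blocks $T_1,\dots,T_r$ one has $\bigcap_i B_{T_i}=\emptyset$, since every guest likes some piece, so the neighbour sets of the $T_i$ cover all $n\ge k$ guests, giving $n\le\sum_i(|T_i|-1)=k-r<k$, which is impossible. The remaining work is to prove the corresponding non-covering lemma (a KKM/Sperner-type statement) from these two properties; it specialises to the classical theorem when $n=k$, and the extra generality is absorbed into the second property above.
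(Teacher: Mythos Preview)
Both reductions are natural, but each has a real gap at the step you mark as unfinished, and in the second case the reduction itself is too coarse.

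For the Sperner route, the owner map is indeed the crux, and the greedy scheme you sketch already breaks when $n=k$: on the first barycentric subdivision of $\Delta^{2}$, processing by dimension and giving the three vertices owners $1,2,3$ forces each edge-barycentre to the remaining label, after which the centroid must avoid all of $\{1,2,3\}$. A valid owner map (dimension colouring) exists here, but no local rule finds it. For $n>k$ what you are asking for is a proper list-colouring of the $1$-skeleton with boundary lists of size $\ge k$, and clique number $k$ does not control list-chromatic number; I do not see how to produce this colouring short of an argument of the same strength as the theorem. (A small side issue: in this setting $S_g(x)$ may be empty and $\bigcup_j A_{g,j}$ need not be all of $\Delta^{k-1}$; your later use of ``at least $k$ hungry guests'' is correct, but the opening set-up overstates things.)

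For the Hall packaging, the two properties you isolate do \emph{not} imply the non-covering lemma. Take $k=3$ and any $0<t<\tfrac13$, and set $B_{\{i\}}=\{x_i<t\}$, $B_{\{i,j\}}=\{x_l>t\}$ where $\{i,j,l\}=\{1,2,3\}$, and $B_{[3]}=\emptyset$. These are open, satisfy property~(1) since $F^T$ sits where the relevant coordinate is $0$ or $1$, and satisfy property~(2) since $\{x_l>t\}\cap\{x_l<t\}=\emptyset$ and $\sum x_i=1>3t$; yet they cover $\Delta^{2}$, because no point of $\Delta^2$ has all three coordinates equal to $t$. So more structure of the $B_T$ coming from the actual preference sets must be retained. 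The paper avoids both difficulties by a genuinely different mechanism: it builds a continuous self-map of $\Delta^{k-1}$ from distances to the preference sets, uses topological degree to hit the barycentre, and then extracts the injection $\pi$ from the limiting $k\times n$ matrix via a non-square Birkhoff-type lemma, so that neither an owner map nor a covering-combinatorics lemma is needed.
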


We interpret rejecting a partition as not listing any piece as a favorite, even pieces of length zero.  The case $n=k$ of the theorem above is the classic cake-cutting result of Stromquist and Woodall.  If among the $n$ guests we can find $k$ hungry guests, we can divide the cake among them.  The result above does not hold with $(k-1)$-hungry guests.  If $n-k+1$ guests never want cake, we are only left with $k-1$ guests who may want a part. This forces us to give one of the parts to a person who does not want it.

The backbone of the proof of \cref{thm:main} is an extension of a classic topological result of Knaster, Kuratowski, and Mazurkiewicz on combinatorial properties of coverings of high-dimensional simplices \cite{Knaster:1929vi}, which we describe in \cref{sec:preliminaries}.  Most variations of cake-cutting results are based on ``colorful'' extensions of such covering theorems, initiated by Gale \cite{Gale1984}.  Our main topological contribution, which directly implies \cref{thm:main}, is \cref{thm:sparse-kkm}.  This is a sparse version of the colorful KKM theorem.

The case $n=k=2$ is often useful to motivate fair partition problems, and can be solved with a ``moving knife'' argument.  The same argument shows a solution for $k=2$ and any $n$.  We first select $n-1$ guests, and make our cut at $x=0$.  At this point all guests who want cake prefer the right side of the cut.  We slide the cut continuously until we reach the first point $x=x_0$ when at least one of the $n-1$ guests prefers the left side of the cut.  By the closed preferences, at least one guest still prefers the right side.  If those are different guests, we have an envy-free distribution.  If the same guest $A$ listed both sides as favorite, we set $A$ apart and ask the remaining $n-1$ guests if anyone wants a piece of cake when we cut at $x_0$.  That guest and $A$ can share the cake.

If one guest's preferences are secret and all guests are hungry, Woodall showed that it is possible to split the cake into $k$ intervals without consulting the secretive guest and still find an envy-free distribution of that partition among the $k$ guests regardless of which piece the secretive guest prefers \cite{Woodall1980}.  An elegant proof was recently found by Asada et al. \cite{Asada:2018ix}.  We extend those results as well, with more guests than pieces of cake.

\begin{theorem}\label{thm:secret-cake}
Let $k \le n$ be positive integers.  Suppose we have $n$ guests who are $k$-hungry.  The preferences of one guest, Alice, are secret.  If we know that Alice is hungry, then it is possible to partition $[0,1]$ into $k$ intervals so that, regardless of which piece Alice prefers, there is an envy-free distribution of the cake among her and $k-1$ other guests.
\end{theorem}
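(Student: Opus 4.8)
The plan is to reduce the secretive case to the main topological result, \cref{thm:sparse-kkm}, by cutting the simplex of partitions along the ``diagonal'' corresponding to the unknown piece Alice will choose. Recall that a partition of $[0,1]$ into $k$ ordered intervals is encoded by a point of the standard simplex $\Delta^{k-1}$ (the interval lengths), and the classical Stromquist--Woodall setup turns each non-secretive guest's preferences into a covering of $\Delta^{k-1}$ by $k$ closed ``owner'' sets (the $j$-th set being the partitions where that guest is happy with piece $j$). For the $n-1$ non-secretive guests who are $k$-hungry once Alice is removed, I would first check that removing Alice preserves enough hungriness: if the $n$ guests are $k$-hungry, then any $n-1-k+1 = (n-1)-(k-1)+1$ of the remaining guests... wait, one must be careful here, since removing Alice drops $n$ to $n-1$ but the hunger parameter should stay useful; the condition I actually want is that the $n-1$ non-secretive guests are $(k-1)$-hungry, which follows because Alice herself is hungry, so in any family of $n-k$ of the others together with Alice there is a hungry guest, and if that guest is Alice we still win since we only need a \emph{piece} of positive length assigned to \emph{some} non-Alice guest after Alice's choice is fixed — this is the delicate bookkeeping step.

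Next I would set up the colorful/sparse KKM system. Following Asada et al., the trick for a secretive guest is to work not on $\Delta^{k-1}$ but on a subdivided or doubled object that records both the partition and Alice's choice $a \in [k]$: concretely, for each fixed value $a$, restrict attention to partitions in which piece $a$ has positive length (so Alice, being hungry, can indeed prefer it), and build, for the $n-1$ non-secretive guests, a sparse colorful KKM instance on the face structure of $\Delta^{k-1}$ where the colors are the $n-1$ guests and each guest contributes $k$ closed sets. Apply \cref{thm:sparse-kkm} to extract a partition $P$ and a rainbow-type selection of $k-1$ distinct guests, each assigned a distinct piece from $[k]\setminus\{a\}$, each happy with their piece in $P$. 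Doing this simultaneously over all $a$ requires a compactness/continuity argument: the set of ``good'' partitions for a given $a$ is closed and nonempty, and I would intersect over $a$, or more robustly, set up a single covering of $\Delta^{k-1} \times \Delta^{k-1}$ (partition space times Alice's-choice simplex) and invoke the sparse colorful KKM theorem once on the product, reading off the simultaneous solution from a single combinatorial cell.

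The main obstacle I anticipate is exactly this ``one partition works for all of Alice's choices'' step: in the $n=k$ case Asada et al. handle it by a clever secretive-KKM lemma whose conclusion is a single simplex whose vertices certify, for each potential Alice-piece $a$, a valid assignment of the other $k-1$ pieces to $k-1$ of the remaining guests; I expect that the correct formulation of \cref{thm:sparse-kkm} already supplies enough combinatorial flexibility (through its ``sparse'' slack, which is what lets us drop down from $n$ guests to $k-1$ and forgive the missing ones) to make the secretive version essentially formal, so that the proof is: (i) verify the $(k-1)$-hunger of the non-secretive guests, (ii) phrase Alice's unknown choice as the extra freedom already built into the sparse colorful statement, (iii) apply \cref{thm:sparse-kkm} and translate the rainbow cell back into a partition plus an envy-free assignment among Alice and $k-1$ others. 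The check in (i) and the precise matching of parameters $(k,n)$ between the cake formulation and the KKM formulation is where I would be most careful, since an off-by-one there breaks the reduction.
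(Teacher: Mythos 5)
Your plan correctly identifies the crux --- producing a \emph{single} partition that works for every piece Alice might choose --- but it does not actually resolve it, and the two mechanisms you sketch for it both fail. Intersecting, over all choices $a$, the closed sets of ``good partitions for $a$'' gives nothing: these are $k$ nonempty closed subsets of $\Delta^{k-1}$ with no reason to have a common point. And invoking \cref{thm:sparse-kkm} ``once on the product $\Delta^{k-1}\times\Delta^{k-1}$'' is not available: that theorem is stated and proved only for covers of a single simplex $\Delta^{k-1}$ indexed by its $k$ facets, and a product-space or polytopal version is precisely the kind of extension the paper flags as open (the colorful Komiya direction). Treating \cref{thm:sparse-kkm} as a black box with ``enough combinatorial flexibility'' is wishful: its conclusion is one point and one injection $\pi$, with no control over which guest receives which piece. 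The per-$a$ reduction you describe (assigning $k-1$ pieces to $k-1$ guests) also does not match the format of \cref{thm:sparse-kkm}, which always assigns all $k$ pieces, and the $(k-1)$-hungry bookkeeping you attempt is left unresolved by your own account.

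The paper's proof instead re-enters the \emph{proof} of \cref{thm:sparse-kkm} rather than applying its statement. Alice's unknown preferences are replaced by the artificial tuple $B^1_j=F_j$, so her column of the matrix $M(x,\varepsilon)$ has entries proportional to $\operatorname{dist}(x,F_j)$ and is never the zero column; the degree argument then forces the limit point $x_0$ to lie in the interior of $\Delta^{k-1}$, making Alice's column strictly positive in every row. The second, easy-to-overlook clause of \cref{lem:modified-birkhoff} --- that one may impose $\pi(j_0)=i_0$ whenever the $(j_0,i_0)$ entry is positive --- is exactly the flexibility you were hoping \cref{thm:sparse-kkm} would provide: for each $j\in[k]$ it yields an injection $\pi_j$ with $\pi_j(j)=1$, i.e.\ an envy-free assignment of the \emph{same} partition $x_0$ in which Alice receives piece $j$. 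Since $x_0$ is interior, every piece has positive length, so the hungry Alice will accept one of them, and the corresponding $\pi_j$ finishes the proof. Without this substitution of facets for Alice's preferences and the fixed-value clause of the Birkhoff-type lemma, your reduction does not close.
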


If we drop the assumption that Alice is hungry, the result above still holds but we cannot guarantee that she will get a piece of cake.  However, we can conclude that if she does want one of the pieces in the proposed partition then there is an envy-free distribution where she gets that piece.

Meunier and Su proved other results for cake-cutting with more guests than cake pieces \cite{Meunier2019a}.  One of their results shows that the cake can be divided into $k$ pieces in advance of the $n$ guests arriving, and even if $\lceil (n-k)/k \rceil$ guests miss the party, we can distribute the cake in an envy-free way among $k$ of those who showed up.  Their result also extends to the $\alpha$-hungry setting.

\begin{theorem}\label{thm:people-missing}
Let $k \le r \le n$ be positive integers.  We aim to split the interval $[0,1]$ among $k$ out of $n$ guests.  Assume that the $n$ guests are $r$-hungry.  Then, there exists a partition of $[0,1]$ into $k$ intervals that, regardless of which $\lceil (r-k)/k \rceil$ guests don't show up, we can distribute the cake among $k$ different remaining guests so that each receives one of their favorite pieces.
\end{theorem}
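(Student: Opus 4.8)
The plan is to follow the template used for \cref{thm:main}. Encode partitions of $[0,1]$ into $k$ intervals by the simplex $\Delta_{k-1}=\{x\in\R^{k}:x_{i}\ge 0,\ \sum_{i}x_{i}=1\}$, reading $x_{i}$ as the length of the $i$-th interval, and for each guest $g\in\{1,\dots,n\}$ and each piece $j\in\{1,\dots,k\}$ let $A^{g}_{j}\subseteq\Delta_{k-1}$ be the set of partitions at which guest $g$ lists piece $j$ among their favorites. By the closed-preferences assumption every $A^{g}_{j}$ is closed, and the $r$-hungry assumption says precisely that for every $x\in\Delta_{k-1}$ at least $r$ of the guests list a favorite piece of positive length at $x$; this is the hypothesis consumed by the sparse colorful KKM theorem. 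Write $m=\lceil(r-k)/k\rceil$ and note that $m\le r-k$, so $r\ge k+m$, and that $r=k$ (so $m=0$) gives back \cref{thm:main} while $r=n$ gives back the theorem of Meunier and Su.

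First I would reduce to producing a single good partition. Fix $x\in\Delta_{k-1}$ and form the bipartite graph $G_{x}$ with the $n$ guests on one side and the $k$ pieces on the other, joining $g$ to $j$ when $x\in A^{g}_{j}$. Distributing the partition $x$ in the required way among $k$ of the guests who remain after a set $M$ of guests leaves is exactly a matching of $G_{x}-M$ that saturates all $k$ pieces; by Hall's theorem applied to the side of the pieces, such a matching exists for a given $M$ if and only if $|N_{G_{x}}(S)\setminus M|\ge|S|$ for every $S\subseteq\{1,\dots,k\}$. Letting $M$ range over all sets of size $m$, these conditions hold simultaneously if and only if $N_{S}(x)\ge|S|+m$ for every nonempty $S\subseteq\{1,\dots,k\}$, where $N_{S}(x)$ denotes the number of guests who like some piece of $S$ at $x$. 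So it suffices to find a partition $x$ with $N_{S}(x)\ge|S|+\lceil(r-k)/k\rceil$ for all nonempty $S$.

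For the existence of such an $x$ I would invoke \cref{thm:sparse-kkm}. The sets $A^{g}_{j}$ are closed and, by $r$-hungriness, have positive-length covering multiplicity at least $r$ at every point of $\Delta_{k-1}$; the sparse colorful KKM theorem then yields a partition $x$ at which every nonempty set $S$ of pieces is liked not only by the $|S|$ guests that a plain colorful statement would give, but by an extra batch obtained by spreading the excess multiplicity $r-k$ over the $k$ pieces — and the pigeonhole that performs this distribution is where the ceiling $\lceil(r-k)/k\rceil$ enters. (If \cref{thm:sparse-kkm} is phrased only for multiplicity $k$, the same barycentric-subdivision and degree argument run with multiplicity $r$ produces this parametrized refinement, which collapses to \cref{thm:sparse-kkm} when $r=k$.) Plugging this $x$ into the reduction of the previous paragraph finishes the proof.

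The main obstacle is the topological step: coaxing the sparse colorful KKM machinery into outputting a partition that is not merely rainbow but \emph{robustly} rainbow, that is, balanced enough that deleting any $m$ guests cannot destroy a saturating matching. Once such a balanced partition is in hand the Hall bookkeeping is routine; the genuine content is that the hungriness surplus $r-k$ can be distributed across all $2^{k}-1$ subsets of pieces rather than being spent on a single rainbow selection, and that the honest per-piece guarantee one can extract is $\lceil(r-k)/k\rceil$.
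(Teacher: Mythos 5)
Your reduction via Hall's theorem is correct, and it is in substance the paper's \cref{lem:modified-su}: the requirement that every deletion of $m=\lceil(r-k)/k\rceil$ guests leave a matching saturating the $k$ pieces is equivalent, by the deficiency form of Hall's theorem, to $N_S(x)\ge |S|+m$ for every nonempty $S\subseteq[k]$. The genuine gap is in the other half: you never produce a partition $x$ with this property. \cref{thm:sparse-kkm} as stated only yields a rainbow point, i.e.\ an injective $\pi$ with $\bigcap_j A^{\pi(j)}_j\neq\emptyset$, which is the single Hall condition for $S=[k]$ with $m=0$; it says nothing about singletons or other $S$, and the $r$-hungry hypothesis at a \emph{generic} point only controls $N_{[k]}(x)$, not $N_{\{j\}}(x)$. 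Your parenthetical claim that ``the same barycentric-subdivision and degree argument run with multiplicity $r$ produces this parametrized refinement'' by spreading the surplus via pigeonhole is an assertion, not an argument, and pigeonhole is not in fact the mechanism by which the ceiling appears.

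What actually produces the surplus in the paper is quantitative, not just a covering-multiplicity count. The degree argument is applied to the $k\times n$ matrix $M(x,\varepsilon)$ of normalized distances, and the special point is chosen so that in the limit matrix $M$ all $k$ row sums are equal to $p/k$. Since at least $r$ columns of $M$ sum to $1$ and every column sum is at most $1$, one gets $p/k\ge r/k$; then for any nonempty set $R$ of pieces the total weight of edges incident to $R$ is $(p/k)|R|\ge (r/k)|R|$, while each guest in $N(R)$ contributes at most $1$, so $|N(R)|\ge |R|+(r-k)|R|/k$, and integrality yields $|N(R)|\ge |R|+\lceil(r-k)/k\rceil$. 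This weighted double count (the proof of \cref{lem:modified-su}), fed by the equal-row-sum normalization coming out of the degree argument, and combined with a finiteness argument to carry the finitely many matchings through the limit $\tau\to 0$, is precisely the content your proposal leaves as a black box. Without it, the claim that a ``robustly rainbow'' partition exists is unsupported.
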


The case $r=n$ is Meunier and Su's result.  We highlight the case $r=k+1$, as it is easier to prove and to interpret.

\begin{corollary}\label{coro:one-fewer}
Let $k < n$ be positive integers.  We aim to split the interval $[0,1]$ among $k$ out of $n$ guests.  Assume that the $n$ guests are $(k+1)$-hungry.  Then, there exists a partition of $[0,1]$ into $k$ intervals so that, regardless of which single person is excluded, the cake can be distributed among $k$ different remaining guests so that each receives one of their favorite pieces.
\end{corollary}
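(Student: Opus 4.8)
Since \cref{thm:people-missing} is already available, the quickest route is to recognize \cref{coro:one-fewer} as its instance $r=k+1$. The hypothesis $k<n$ gives $k\le r\le n$; the guests are assumed $(k+1)$-hungry, which is exactly the $r$-hungry hypothesis with $r=k+1$; and since $k\ge 1$ we have $\lceil (r-k)/k\rceil=\lceil 1/k\rceil=1$, so ``whichever $\lceil (r-k)/k\rceil$ guests don't show up'' becomes ``whichever single person is excluded.'' With \cref{thm:people-missing} in hand there is nothing left to check.

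Because we advertise the corollary as easier, it is worth describing a direct argument that uses only the mildest case of the sparse colorful KKM machinery (and no more of \cref{thm:sparse-kkm} than a hands-on Sperner-type count). Parametrize partitions of $[0,1]$ into $k$ intervals by the simplex $\Delta^{k-1}$ of length vectors $(x_1,\dots,x_k)$, and for each guest $i$ and part $j$ let $A_{i,j}\subseteq\Delta^{k-1}$ be the set of partitions at which guest $i$ lists part $j$ among their favorites; closed preferences make each $A_{i,j}$ closed, and the hungry half of $(k+1)$-hungriness forces $A_{i,j}$ to miss the face $\{x_j=0\}$ whenever guest $i$ is hungry there. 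First I would record the combinatorial reformulation: a partition $x^\ast$ works for \emph{every} excluded guest precisely when the bipartite preference graph $G_{x^\ast}$ on parts and guests (edge $j\sim i$ iff $x^\ast\in A_{i,j}$) has, for every nonempty set $S$ of parts, at least $|S|+1$ guests adjacent to $S$; by Hall's theorem this is equivalent to $G_{x^\ast}-i$ still saturating all $k$ parts for each single guest $i$. Then I would feed the coverings $\{A_{i,j}\}$ into the colorful-KKM-with-multiplicity statement appropriate to $r=k+1$ (morally: a point covered ``with $k+1$ colorful choices'' rather than merely enough to run the general argument) and read off such an $x^\ast$. The ``$+1$'' in the Hall inequality is exactly the ``$+1$'' in ``$(k+1)$-hungry,'' which is the same as saying at least $k+1$ guests are hungry at any partition.

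The step I expect to be the real work is the topological extraction: showing that the closed coverings produced by $(k+1)$-hungry guests necessarily meet in a point that lies in enough of the sets $A_{i,j}$, across enough distinct guest-indices, to make $|N_{G_{x^\ast}}(S)|\ge |S|+1$ hold simultaneously for all nonempty $S\subseteq[k]$. The delicate point is the degenerate behaviour on the faces $\{x_j=0\}$, where some guests need not be hungry and the KKM boundary condition can fail for them; as in the classical Stromquist--Woodall argument this is handled by noting that on such a face some hungry guest prefers a part of positive length, so the ``missing'' labels can always be supplied. Once the covering has the correct boundary behaviour, a Sperner-type counting argument (packaged in \cref{thm:sparse-kkm}) produces $x^\ast$, and Hall's theorem then closes the argument uniformly over the choice of excluded guest.
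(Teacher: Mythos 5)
Your first paragraph is exactly the paper's own (second) justification: the paper explicitly notes that \cref{coro:one-fewer} is a direct corollary of \cref{thm:people-missing} with $r=k+1$, and your verification that $k<n$ gives $k\le r\le n$ and that $\lceil (r-k)/k\rceil=1$ is complete and correct. The additional ``direct argument'' you sketch in the later paragraphs is only heuristic and not needed (the paper's own direct route instead modifies \cref{lem:modified-birkhoff} so that the appended row is strictly positive, allowing any single guest to be forced out of the matching), but since the reduction to \cref{thm:people-missing} already constitutes a full proof, the proposal stands.
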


The dual problems to cake-cutting results are known as rent-splitting problems.  In this setting, guests prefer empty pieces over non-empty ones.  We also obtain rent-splitting versions of the theorems mentioned above.  The interpretation, in this case, is that we have $n$ potential tenants and an apartment with $k$ rooms and a fixed rent.  Given a price distribution for the rooms, each potential tenant states which rooms they are willing to rent or if they do not want any room with the current proposal.  The key condition is that if any room is free, they will only accept a free room.

\begin{theorem}\label{thm:rent}
Let $k \le n$ be positive integers.  We aim to rent an apartment with $k$ rooms and have $n$ potential tenants.  Suppose that for any price distribution of the rooms and any $n-k+1$ potential tenants, at least one of them is willing to rent one of the rooms with the proposed price, and their preferences are closed.  Then, we can find a price distribution and $k$ tenants such that each is willing to rent a different room.
\end{theorem}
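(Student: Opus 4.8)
The plan is to run the proof of \cref{thm:main} ``upside down'', exploiting the familiar duality between cake-cutting and rent-splitting: where a cake-cutting argument feeds a KKM-type covering of the simplex into a topological statement, the rent-splitting argument feeds in a ``dual'' covering.

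Concretely, I would parametrize price distributions by $\Delta_{k-1}=\{x\in\R^k:x_i\ge 0,\ \sum_i x_i=1\}$, with $x_i$ the price of room $i$ and total rent normalized to $1$, and for each tenant $j\in[n]$ and room $i\in[k]$ set $B_{j,i}=\{x\in\Delta_{k-1}:\text{tenant }j\text{ is willing to rent room }i\text{ at prices }x\}$. The hypothesis that the tenants' preferences are closed says exactly that each $B_{j,i}$ is a closed subset of $\Delta_{k-1}$, so --- in contrast with the usual cake-cutting setup --- no closure needs to be taken and any point we extract will be an honest solution. The structural fact dual to the ``hungry'' condition is a boundary statement: if $x\in\partial\Delta_{k-1}$ has support $S\subsetneq[k]$, then the rooms outside $S$ are free, so a tenant accepting any room at $x$ accepts a free one; hence $B_{j,i}$ touches $\partial\Delta_{k-1}$ only inside the facet $\{x_i=0\}$. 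Combined with the hypothesis that among any $n-k+1$ tenants at least one will rent some room --- which, as in the $\alpha$-hungry cake-cutting setup, supplies the ``sparse'' covering condition on the faces of $\Delta_{k-1}$ --- this is precisely the input demanded by the dual of \cref{thm:sparse-kkm}, a sparse colorful dual-KKM theorem.

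I would then invoke that dual statement to obtain a price vector $x^\ast\in\Delta_{k-1}$, a set of $k$ distinct tenants $\{j_1,\dots,j_k\}$, and a bijection $\pi$ from $\{j_1,\dots,j_k\}$ onto $[k]$ with $x^\ast\in B_{j_\ell,\pi(j_\ell)}$ for every $\ell$; that is, at prices $x^\ast$ these $k$ tenants are willing to rent pairwise distinct rooms, which is \cref{thm:rent}. It is worth recording why one cannot simply feed a ``complementary'' preference relation into \cref{thm:main}: the obvious substitution turns tenants who want a free room into guests who prefer empty pieces, so a genuinely dual topological ingredient is unavoidable.

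The main obstacle is the dual of \cref{thm:sparse-kkm} itself, together with the careful bookkeeping of its boundary hypotheses. The cleanest route is to derive it from \cref{thm:sparse-kkm} via the standard passage between a family whose sets meet $\partial\Delta_{k-1}$ only inside facet $i$ and one whose sets avoid facet $i$ --- replacing each $B_{j,i}$ by a suitable complement in $\Delta_{k-1}$, or assembling a KKM family from the boundary traces of the $B_{j,i}$ by induction on faces --- after which one must check that the sparse covering condition survives restriction to every face, which is exactly what lets the inductive skeleton of the KKM argument close up; alternatively, one re-runs the proof of \cref{thm:sparse-kkm} with the orientation of the boundary reversed. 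Finally I would check the endpoints: $k=n$ should recover the classical rent-splitting theorem, and $k=1$ is trivial.
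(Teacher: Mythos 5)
Your setup is the right one and agrees with the paper's: prices live in $\Delta^{k-1}$, the acceptance sets $B_{j,i}$ are closed, they meet $\partial\Delta^{k-1}$ only inside the facet $F_i$, and the hypothesis on every $n-k+1$ tenants makes the family exactly what the paper calls $k$-weakly dual KKM. The gap is that you have deferred all of the mathematical content to a ``dual of \cref{thm:sparse-kkm}'' which you never prove. None of the routes you sketch for it is workable as written: replacing each $B_{j,i}$ by a complement in $\Delta^{k-1}$ destroys closedness and does not produce a KKM family; ``assembling a KKM family from boundary traces by induction on faces'' is a slogan without an argument; and ``re-running the proof with the orientation of the boundary reversed'' would require redoing the degree computation for a map that pushes the relative interior of $F_j$ toward the vertex $v_j$ rather than keeping each face invariant --- a genuinely different boundary behavior whose degree you would still have to compute. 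Since the dual statement is the whole theorem, this is a real gap rather than a routine verification.

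The paper closes it with a purely combinatorial observation that needs no new topology: if $(A_1,\dots,A_k)$ is a dual KKM cover of $\Delta^{k-1}$, then the cyclic shift $(A_2,\dots,A_k,A_1)$ is an ordinary KKM cover. Indeed, the dual conditions force $A_j\cap\partial\Delta^{k-1}=F_j$, and every proper face $\sigma$ contains some vertex $v_j$ with $v_{j+1}\notin\sigma$ (indices mod $k$), so $\sigma\subset F_{j+1}\subset A_{j+1}$, which is precisely the set carrying label $j$ after the shift; the face $\sigma=\Delta^{k-1}$ is handled by the covering hypothesis. The same shift turns your $k$-weakly dual KKM family into a $k$-weakly KKM family, so \cref{thm:sparse-kkm} applies verbatim and you recover the rent division by unshuffling the room labels at the end. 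Replacing your speculative dual theorem with this shift would complete the proof.
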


\cref{thm:rent} is the version of \cref{thm:main} for rent division.  The analog rent division versions of \cref{thm:secret-cake} and \cref{thm:people-missing} also hold although we do not state them explicitly.  The weakening of the hungry condition in \cref{thm:main} is similar to sparse versions of the colorful Carath\'eodory theorem \cites{Holmsen:2016fo, Soberon:2018gn}.  A relaxed version says that \textit{given $n$ sets of points in $\rr^{k-1}$, if the convex hull of the union of any $n-k+1$ sets contains the origin, we can choose one point from each of $k$ of the sets such that the convex hull of the resulting set contains the origin}.

Frick and Zerbib proved a colorful version of a theorem of Komiya regarding coverings of polytopes \cites{Frick2019, Komiya1994} which generalizes both the colorful Carath\'eodory theorem \cite{Barany1982} and the colorful KKM theorem \cite{Gale1984}.  The colorful KKM theorem is the topological backbone behind most cake-cutting results, including those of this manuscript.  Since both the colorful Carath\'eodory theorem and the colorful KKM theorem can be extended as in \cref{thm:main}, it would be interesting to know if the colorful Komiya theorem has such an extension.  After this manuscript was uploaded to a public repository, McGinnis and Zerbib proved such an extension of Komiya's theorem \cite{McGinnis2021}.

Our proofs are topological and rely on the computation of the topological degree of a map, combined with compactness arguments.  We describe the topological and linear-algebraic preliminaries in \cref{sec:preliminaries} and \cref{sec:linear-algebra}, respectively.  The proofs of the theorems described in the introduction follow in \cref{sec:proofs}.

%We prove a weaker version of this below.

%\begin{theorem}\label{thm:colorful-komiya}
%Let $k \le n$ be positive integers.  Let $P$ be a $(k-1)$-dimensional polytope such that $0 \in P$ and $L(P)$ be its face lattice.  For each face $\sigma$ of $P$ and each $i\in [n]$ we are given a point $y^{i}_{\sigma} \in \sigma$ and a an open set $A^i_{\sigma} \subset P$.  Suppose that for any $C \in \binom{[n]}{n-k+1}$ we have 
%\[
%\sigma \subset \bigcup_{i \in C}\bigcup_{{\tau \subset \sigma}}A^{i}_{\tau}.
%\]
%Then, there exists function $\pi_1:[k] \to [n], \pi_2:[k] \to L(P)$ such that 
%\[0 \in \operatorname{conv}\{y^{\pi_1(j)}_{\pi_2(j)} : j \in [k]\}, \quad \mbox{ and } \quad \bigcap_{j=1}^k A^{\pi_1(j)}_{\pi_2(j)}\neq \emptyset.\]
%\end{theorem}

%We conjecture that in the theorem above we can ask for $\pi_1$ to be injective.  The case $n=k$ with injective $\pi_1$ is the colorful Komiya theorem.

\section{Topological preliminaries}\label{sec:preliminaries}

We can identify a partition of $[0,1]$ into $k$ intervals with a vector $(x_1, \ldots, x_k)$ where $x_j$ is the length of the $j$-th piece, so it is a point in the $(k-1)$-dimensional simplex $\Delta^{k-1}$.  Let $v_1, \ldots, v_k$ be the vertices $\Delta^{k-1}$.  For $j=1,\ldots, k$, let $F_j$ be the facet of $\Delta^{k-1}$ opposite to $v_j$.  We denote by $[n]$ the set $\{1,\ldots, n\}$.

Given a guest, we can denote by $A_j \subset \Delta^{k-1}$ the set of partitions of $[0,1]$ in which they prefer the $j$-th piece.  The conditions described in the introduction imply that $A_j$ must be a closed set.

\begin{definition}
We say that a $k$-tuple $(A_1,\ldots, A_k)$ is a KKM cover of $\Delta^{k-1}$ if for every face $\sigma$ of $\Delta^{k-1}$, we have $\displaystyle \sigma \subset \bigcup_{v_j \in \sigma} A_j$.
\end{definition}

Note that we can take $\sigma = \Delta^{k-1}$, so the union of the sets $A_j$ covers $\Delta^{k-1}$.

\begin{figure}
    \centering
    \includegraphics[width=1\textwidth]{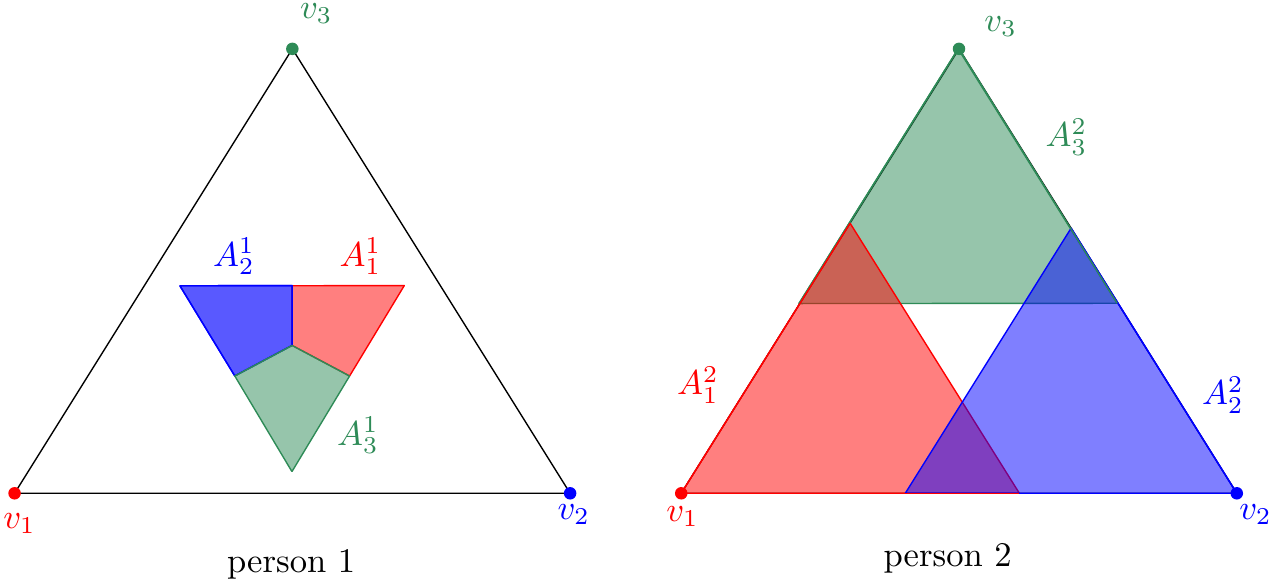}
    \caption{Two preferences for $k=3$.  Person $1$ preferss the smallest piece, but won't have cake if someone has almost half the cake.  Person $2$ prefers a sufficiently large piece of the cake.  Neither is a KKM cover, but their union is a KKM cover.}
    \label{fig:KKM}
\end{figure}

KKM covers of $\Delta^{k-1}$ were defined by Knaster, Kuratowski, and Mazurkiewicz \cite{Knaster:1929vi}.  A KKM cover corresponds to the preferences of a hungry guest.  Knaster, Kuratwoski, and Mazurkiewicz proved that, for any KKM cover, $\bigcap_{j=1}^k A_j \neq \emptyset$.  This translates to the existence of a partition in which the hungry guest does not mind which part they get.  Gale proved the following ``colorful'' version of this result.

\begin{theorem}[Gale 1984 \cite{Gale1984}]
Let $k$ be a positive integer.  For $i=1,\ldots, k$, let $(A^i_1,\ldots, A^i_k)$ be a KKM cover of $\Delta^{k-1}$.  Then, there exists a permutation ${\pi:[k] \to [k]}$ for which $\bigcap_{j=1}^k A^{\pi(j)}_j \neq \emptyset$.
\end{theorem}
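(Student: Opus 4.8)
The plan is to deduce the colorful version from the classical (one-coloring) KKM theorem by a standard trick: average the $k$ KKM covers to produce a single KKM cover, apply the classical result to get a common point, and then extract a colorful selection from a combinatorial fact about the indices that ``witness'' membership at that point. Concretely, I would first recall that the classical KKM theorem says any KKM cover of $\Delta^{k-1}$ has $\bigcap_{j=1}^k A_j \neq \emptyset$, and note that it is robust: a small perturbation of a KKM cover into finitely many closed sets whose union is still a KKM cover still has a common point, by compactness.

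The core step: for each point $x \in \Delta^{k-1}$ and each coloring $i \in [k]$, since $(A_1^i,\ldots,A_k^i)$ is a KKM cover, $x$ lies in $A_j^i$ for at least one $j$ with $v_j$ in the carrier (minimal face) of $x$; in fact we get a map assigning to $x$ a nonempty set of ``admissible'' labels $j$ for each color $i$, with the carrier constraint. The goal is a point $x$ and a system of distinct representatives (a permutation $\pi$) such that $x \in A_j^{\pi(j)}$ for all $j$. I would set this up as follows. Define for each color $i$ the closed set $B_j^i = A_j^i$ and consider the ``union over colors'' sets: for a subset $S \subseteq [k]$ and index $j$, the relevant quantity is whether $x \in \bigcup_{i \in S} A_j^i$. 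One clean route is to use a barycentric-type argument: build an auxiliary KKM cover $(C_1,\ldots,C_k)$ where $C_j$ is, roughly, the set of $x$ such that color $j$ (or some color) has its $A$-label equal to $j$ at $x$ — but this requires care so that the colorful conclusion actually follows. The cleanest classical argument (Bárány's proof of the colorful Carathéodory carried over to KKM, as in Gale) is: apply the classical KKM theorem not to a single cover but iteratively, or apply a Hall-type / SDR argument at the common point.

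More precisely, here is the route I expect to work. For a fixed point $x$, let $T_i(x) = \{ j : x \in A_j^i\} \neq \emptyset$, and note $T_i(x)$ only uses labels $j$ in the carrier of $x$. If for \emph{some} $x$ the bipartite graph on colors $[k]$ and labels $[k]$ with edges $\{i,j\}$ whenever $j \in T_i(x)$ has a perfect matching, we are done by taking $\pi$ to be that matching read appropriately. To force such an $x$ to exist, assume for contradiction that for every $x$ Hall's condition fails: there is a set $I(x)$ of colors with $|\bigcup_{i \in I(x)} T_i(x)| < |I(x)|$. Then build from this deficiency a single classical KKM cover that has \emph{empty} intersection, contradicting the classical KKM theorem. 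The construction of that contradictory cover — using upper semicontinuity of $x \mapsto T_i(x)$ (which follows from closedness of the $A_j^i$), a compactness argument to make the Hall-violating set locally constant, and a partition-of-unity / face-compatibility check to verify the KKM property — is the main obstacle. Once the classical KKM theorem is contradicted, we conclude that some $x$ admits a perfect matching, and reading off $\pi^{-1}$ (or $\pi$) from the matching gives $\bigcap_{j=1}^k A_j^{\pi(j)} \ni x$, completing the proof. An alternative, possibly smoother, obstacle-avoiding route is to invoke Gale's original degree-theoretic argument directly, which the paper in any case generalizes in \cref{sec:preliminaries}--\cref{sec:linear-algebra}; but since those tools appear later, the self-contained Hall-plus-classical-KKM argument above is the one I would write out here.
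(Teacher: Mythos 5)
Your write-up is a plan rather than a proof, and the step you yourself flag as ``the main obstacle'' is precisely where the entire content of the theorem lies. Concretely: at a point $x$ where Hall's condition fails for the bipartite graph with edge set $\{(i,j): x\in A^i_j\}$, you have a deficient set of colors $I(x)$ and a label set $N(I(x))$ with $|N(I(x))|<|I(x)|$, but both sets vary with $x$, and you give no construction of a single KKM cover of $\Delta^{k-1}$ with empty intersection out of these local deficiency certificates. There is no routine way to do this. The closedness of the $A^i_j$ makes the edge sets closed, so edges can only be \emph{gained} in a limit; this works against, not for, patching local Hall violations into a global closed cover with the KKM boundary condition. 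More fundamentally, the colorful KKM theorem is not known to reduce to the classical KKM theorem by a combinatorial (Hall/SDR) argument: every proof in the literature uses a genuinely ``colorful'' topological input --- Gale's original degree computation for a map built from all $k$ covers simultaneously, or a colorful Sperner-type lemma --- and the analogous colorful statements for general matroids are false, which indicates that a purely matching-theoretic reduction cannot succeed. Your opening ``averaging'' observation (apply classical KKM to the union cover $\bigl(\bigcup_i A^i_1,\ldots,\bigcup_i A^i_k\bigr)$) only produces a point where each label $j$ is claimed by \emph{some} color, with no control on distinctness, which is exactly the theorem minus its conclusion. Your fallback of ``invoking Gale's degree-theoretic argument directly'' is a citation, not a proof.

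For comparison: the paper does not reprove Gale's theorem either --- it cites it --- but it does prove the strictly stronger \cref{thm:sparse-kkm}, whose case $n=k$ is Gale's theorem. That proof is the degree-theoretic route: from the $k$ covers one builds a matrix $M(x,\varepsilon)$ of normalized distances, gets a continuous self-map of $\Delta^{k-1}$ that preserves faces and hence has degree one on the boundary, finds a point mapped to the barycenter, passes to a limit to obtain a (doubly) stochastic matrix, and extracts the permutation $\pi$ from Birkhoff's theorem (\cref{lem:modified-birkhoff}). Note that the SDR extraction in that argument happens at the level of the limiting stochastic matrix, where row and column sums force the matching to exist; this is the quantitative substitute for the Hall-type step that your sketch leaves unproved. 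As written, your proposal has a genuine gap and does not establish the theorem.
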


In the context of cake-cutting, the result above guarantees the existence of a partition in which guest $\pi(j)$ is content to receive the $j$-th piece, so we have an envy-free distribution.  If we write our condition of $\alpha$-hungry in terms of KKM covers of $\Delta^{k-1}$ we require the following definition.

\begin{definition}
Let $\alpha, k$, and $n$ be positive integers so that $n \ge \alpha$ and $n \ge k$.  For every $i=1,\ldots, n$ we have a $k$-tuple $(A^i_1,\ldots, A^i_k)$ of closed subsets of $\Delta^{k-1}$.  We say that the family of $k$-tuples is \textit{$\alpha$-weakly KKM} if for every $C \in \binom{[n]}{n-\alpha+1}$ the $k$-tuple
\[
\left(\bigcup_{i\in C}A^i_1, \ldots, \bigcup_{i\in C}A^i_k \right)
\]
is a KKM cover of $\Delta^{k-1}$.
\end{definition}

We illustrate an example in \cref{fig:KKM} of two preference triples that could appear in such covers of $\Delta^{2}$.  Finally, for the rent-splitting version of our theorems we use dual KKM covers defined as follows.

\begin{definition}
We say that a $k$-tuple $(A_1,\ldots, A_k)$ is a dual KKM cover of $\Delta^{k-1}$ if
\begin{itemize}
    \item for $j=1,\ldots, k$ the set $A_j$ is a closed subset of $\Delta^{k-1}$,
    \item if $j \neq j'$, then $A_j \cap F_{j'} \subset F_j \cap F_{j'}$, and 
    \item the union of $A_1,\ldots, A_k$ covers $\Delta^{k-1}$.
\end{itemize}
\end{definition}

The condition $A_j \cap F_{j'} \subset F_j \cap F_{j'}$ corresponds to the statement \textit{``if room $j'$ is free, we may only prefer room $j$ over it if it is also free''}.  Alternatively, the condition on the set is equivalent to $A_j \cap \partial \Delta^{k-1}= F_j$.  This is because the relative interior of $F_j$ can only be covered by $A_j$, and since $A_j$ is closed we obtain $F_j \subset A_j$.  For our results, we extend the definition as follows.

\begin{definition}
Let $\alpha, k,$ and $n$ be positive integers so that $n \ge \alpha$ and $n \ge k$.  For every $i=1,\ldots, n$ we have a $k$-tuples $(A^i_1,\ldots, A^i_k)$ of closed subsets of $\Delta^{k-1}$.  We say that the family of $k$-tuples is \textit{$\alpha$-weakly dual KKM} if for every $C \in \binom{n}{n-\alpha+1}$ the $k$-tuple
\[
\left(\bigcup_{i\in C}A^i_1, \ldots, \bigcup_{i\in C}A^i_k \right)
\]
is a dual KKM cover of $\Delta^{k-1}$.
\end{definition}

\section{Linear-algebraic preliminaries}\label{sec:linear-algebra}

In our proofs, we use properties of non-square matrices with non-negative entries with some conditions on their rows and columns.  These are extensions of Birkhoff's theorem on doubly stochastic matrices \cite{Birkhoff1946}.  Birkhoff proved that every doubly stochastic matrix is a convex combination of permutation matrices.  In particular, if $X$ is a $k \times k$ doubly stochastic matrix, there exists a permutation $\pi:[k] \to [k]$ such that its entries $x_{ji}$ satisfy $x_{j\pi(j)}>0$ for all $j \in [k]$.

We extend this result to certain non-square matrices.  The proof relies on bootstrapping Birkhoff's theorem.

\begin{lemma}\label{lem:modified-birkhoff}
Let $k \le n$ be positive integers and $X$ be a $k \times n$ matrix with non-negative entries.  Assume that the sum of rows of $X$ , the sum of each column of $X$ is at most $1$, and there are at least $k$ columns that each sums to $1$.  Then, there exists an injective function $\pi:[k] \to [n]$ such that $x_{j\pi(j)}>0$ for all $j \in [k]$.  Moreover, if $x_{j_0i_0}>0$ for some particular values $j_0 \in [k], i_0 \in [n]$, we may impose the condition $\pi(j_0)=i_0.$
\end{lemma}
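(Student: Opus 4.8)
The plan is to reduce to Birkhoff's theorem by completing $X$ to a square doubly stochastic matrix, while taking care not to disturb the positive entries we want to keep. Let me first read the hypothesis carefully: the sum of each row of $X$ is... (the statement seems to say the row sums are some fixed value; I will assume each row sums to $1$, which is the natural normalization making a $k\times n$ matrix with $k$ column-sums equal to $1$ consistent). So we have a $k\times n$ nonnegative matrix, every row summing to $1$, every column summing to at most $1$, and at least $k$ columns summing to exactly $1$. The total mass is $k$, and it is concentrated so that $k$ columns are ``full''.

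First I would handle the extra $n-k$ columns. The $n-k$ columns that do not sum to $1$ carry total mass $t := (n-k) - \sum (\text{those column sums}) \ge 0$ of ``slack''; more precisely the $k$ full columns carry mass $k - s$ where $s$ is the mass in the non-full columns, and $s \le n-k$. I would append $n-k$ new rows to turn $X$ into an $n\times n$ matrix $Y$: the new rows must be chosen nonnegative so that every column of $Y$ sums to exactly $1$ and every new row sums to exactly $1$. This is a transportation problem: the column deficiencies are $1 - (\text{column sum of }X)$, which are nonnegative and sum to $n - k$, and the $n-k$ new row sums are each $1$, also totalling $n-k$; so a feasible nonnegative filling $Y$ exists (e.g. by the northwest-corner rule, or by invoking the Gale–Ryser / max-flow feasibility for transportation polytopes, which is nonempty exactly because supplies and demands agree). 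Crucially, the full columns of $X$ have deficiency $0$, so in $Y$ they receive no new mass; the original block $X$ sits unchanged in the top $k$ rows.

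Now $Y$ is an $n\times n$ doubly stochastic matrix, so by Birkhoff's theorem there is a permutation $\sigma:[n]\to[n]$ with $y_{j\sigma(j)} > 0$ for all $j\in[n]$. Restrict to $j\in[k]$: since the top-left $k\times n$ block of $Y$ is exactly $X$, we get $x_{j\sigma(j)} > 0$ for $j = 1,\dots,k$, and $\pi := \sigma|_{[k]}$ is injective into $[n]$, as desired.

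For the ``moreover'' clause, suppose $x_{j_0 i_0} > 0$ and we want $\pi(j_0) = i_0$. Here I would use the standard fact that the permutations appearing with positive coefficient in a Birkhoff decomposition of a doubly stochastic matrix $Y$ can be made to use any prescribed edge $(j_0,i_0)$ with $y_{j_0 i_0} > 0$: form $Y' = (Y - \varepsilon P_0)/(1-\varepsilon)$ for small $\varepsilon>0$ where $P_0$ is a permutation matrix through $(j_0,i_0)$ supported on $\operatorname{supp}(Y)$ (such a $P_0$ exists because $Y$ doubly stochastic forces $\operatorname{supp}(Y)$ to contain a perfect matching through any given edge — this is exactly the marriage/König argument), keep $Y'$ doubly stochastic for $\varepsilon$ small, and conclude $y_{j_0i_0}$ only needs to be positive. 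Then applying the argument of the previous paragraph to $Y'$ (or just noting $P_0|_{[k]}$ already gives an injection with the required property on $X$, since $\operatorname{supp}(P_0)\subset\operatorname{supp}(Y)$ and the top block of $Y$ is $X$) yields $\pi$ with $\pi(j_0)=i_0$.

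The main obstacle I anticipate is the completion step: one must verify that a valid nonnegative filling $Y$ exists and, more importantly, that it leaves the full columns (hence the positivity pattern of $X$ we care about) untouched. The first is a routine transportation-feasibility check (supplies $=$ demands $= n-k$); the second is automatic since a column of deficiency $0$ can receive no nonnegative mass. A secondary subtlety is that the lemma's phrase ``the sum of rows of $X$'' must be read as a specific value (each row sums to $1$); if instead the row sums are only assumed bounded or equal to a common value $c$, one rescales by $c$ first — but the total-mass bookkeeping forces $c=1$ anyway once $k$ columns are full and all column sums are $\le 1$, since $k = \sum_{\text{rows}} 1\cdot c = \sum_{\text{cols}}(\text{col sum}) \le n$ gives no contradiction, yet $\sum_{\text{cols}}(\text{col sum}) \ge k$ from the full columns forces the non-full columns to absorb exactly $kc - k$, which is consistent for any $c\ge 1$; the cleanest route is simply to take $c=1$ as the intended normalization and proceed as above.
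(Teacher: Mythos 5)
Your overall strategy --- append $n-k$ rows to complete $X$ to a square stochastic matrix, apply Birkhoff's decomposition, and force the edge $(j_0,i_0)$ by noting that some permutation in the decomposition must pass through any positive entry --- is exactly the paper's strategy, and your treatment of the ``moreover'' clause is the same as the paper's. The one genuine problem is your normalization. The garbled hypothesis ``the sum of rows of $X$'' should be read as in \cref{lem:modified-su}: the row sums are all equal to some common value $c$, not necessarily $1$. Your closing discussion correctly observes that the bookkeeping is ``consistent for any $c\ge 1$,'' but you then settle on $c=1$ anyway. That choice trivializes the lemma: if every row sums to $1$, the total mass is $k$, the $k$ full columns already account for all of it, and the remaining $n-k$ columns are identically zero. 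This degenerate case is not the one needed in the proof of \cref{thm:sparse-kkm}, where the limit matrix $M$ has common row sum $p/k$ with $p$ the number of nonzero columns, typically strictly larger than $k$. Moreover, for $c>1$ the transportation problem you pose is infeasible as stated: the column deficiencies $1-s_i$ sum to $n-kc<n-k$, while your $n-k$ new rows are each required to sum to $1$.

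The repair is small and is what the paper does: complete $X$ to an $n\times n$ matrix all of whose rows and columns sum to $c=p/k$, i.e.\ to $c$ times a doubly stochastic matrix (the paper uses the explicit uniform filling $\frac{1}{n-k}\left(\frac{p}{k}-s_i\right)$ in column $i$, nonnegative because $\frac{p}{k}\ge 1\ge s_i$), and then run Birkhoff and the edge-forcing argument exactly as you describe. Your transportation-feasibility check goes through verbatim with demands $c-s_i$ and supplies $c$ per new row; note that with this target the full columns \emph{do} receive new mass, but that is harmless --- all that matters is that the top $k$ rows of the completed matrix are still $X$, so restricting the chosen permutation to $[k]$ selects positive entries of $X$.
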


\begin{proof}
The case $k=n$ is Birkhoff's theorem, so we assume $k<n$.  Let $s_i$ be the sum of the entries in the $i$-th column of $X$ and $p=\sum_{i=1}^n s_i$.  The sum of each row is $p/k$.  We append $n-k$ rows to $X$ such that each entry added to the $i$-th column is equal to
\[
\frac{1}{n-k}\left(\frac{p}{k}-s_i\right).
\]
First, since $k$ columns add to $1$, we know $p/k \ge 1 \ge s_i$ for all $i$, so each new entry is non-negative.  Second, the sum in each new row is 
\begin{align*}
    \sum_{i=1}^n\left(\frac{1}{n-k}\left(\frac{p}{k}-s_i\right)\right) = \frac{n}{n-k}\cdot \frac{p}{k} - \frac{\sum_{i=1}^n s_i }{n-k} = \frac{n}{n-k}\cdot \frac{p}{k} - \frac{p }{n-k} = \frac{p}{k}.
\end{align*}
Third, the sum of each column is $s_i + (n-k)[(p/k-s_i)/(n-k)] = p/k$.  The new $n \times n$ matrix is a scalar multiple of a doubly stochastic matrix, so we can write it as a linear combination of permutation matrices with only positive coefficients.  Since the entry $(j_0, i_0)$ of $X$ is positive, in the linear combination there must also be a permutation with a $1$ at the $(j_0, i_0)$ entry.  The first $k$ rows of that permutation matrix induce the injective function we were looking for.
\end{proof}

If we know that more columns sum to $1$, we can extend a similar result of Meunier and Su \cite{Meunier2019a}*{Lemma 2.10}.  The proof for $r=k$ gives an alternative proof of \cref{lem:modified-birkhoff} without imposing a single fixed value on $\pi$.

\begin{lemma}\label{lem:modified-su}
Let $k \le r \le n$ be positive integers and $X$ be a $k \times n$ matrix with non-negative entries.  Assume that the sum of rows of $X$ is constant.  The sum of each column of $X$ is at most $1$, and there are at least $r$ columns that each sums to $1$.  Then, for each set $C \in \binom{[n]}{\lceil(r-k)/k \rceil}$ there exists an injective function $\pi:[k] \to [n]\setminus C$ such that $x_{j\pi(j)}>0$ for all $j \in [k]$. 
\end{lemma}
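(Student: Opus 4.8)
The plan is to reduce the existence of $\pi$ to finding a matching in a bipartite graph and then verify Hall's condition using the column-sum constraints. Write $\rho$ for the common row sum of $X$. Since the column sums $s_1,\ldots,s_n$ are non-negative and at least $r$ of them equal $1$, we get $k\rho=\sum_{i=1}^{n}s_i\ge r$, so $\rho\ge r/k$. One could instead try to mimic the proof of \cref{lem:modified-birkhoff} by appending rows and invoking Birkhoff's theorem, but Birkhoff gives no control over which columns the matching uses, whereas here we must avoid the columns in $C$; Hall's theorem is the natural tool.

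Fix $C\in\binom{[n]}{\lceil(r-k)/k\rceil}$ and form the bipartite graph $G$ with parts $[k]$ and $[n]\setminus C$, joining $j\in[k]$ to $i\in[n]\setminus C$ precisely when $x_{ji}>0$. An injective map $\pi\colon[k]\to[n]\setminus C$ with $x_{j\pi(j)}>0$ for all $j$ is exactly a matching of $G$ saturating $[k]$, so by Hall's marriage theorem it suffices to show $|N_G(S)|\ge|S|$ for every $S\subseteq[k]$. Suppose not: choose $S$ with $|N_G(S)|\le|S|-1$, so in particular $|S|\ge 1$, and put $T=N_G(S)\cup C$, so that $|T|\le|N_G(S)|+|C|\le(|S|-1)+\lceil(r-k)/k\rceil$.

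Now estimate the total mass $\sum_{j\in S}\sum_{i=1}^n x_{ji}=|S|\rho$ of the rows indexed by $S$ in a second way. Every positive entry of a row of $S$ lies in a column of $T$, since the columns such a row can touch are its $G$-neighbours (all in $[n]\setminus C$) together with, possibly, some columns of $C$. Hence $|S|\rho\le\sum_{i\in T}s_i\le|T|\le(|S|-1)+\lceil(r-k)/k\rceil$, that is, $|S|(\rho-1)\le\lceil(r-k)/k\rceil-1$. If $r>k$, then $\rho-1\ge(r-k)/k>0$ while $\lceil(r-k)/k\rceil-1<(r-k)/k$, which forces $|S|<1$, contradicting $|S|\ge1$. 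If $r=k$, the left-hand side is non-negative (as $\rho\ge1$) while the right-hand side equals $-1$, again a contradiction. Thus Hall's condition holds and the required $\pi$ exists. Specializing to $r=k$ (so $C=\emptyset$) yields an alternative proof of \cref{lem:modified-birkhoff} without the prescribed value $\pi(j_0)=i_0$.

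The argument is short, and the only delicate point is the bookkeeping around $C$: the neighbourhood $N_G(S)$ is taken inside $[n]\setminus C$, yet rows in $S$ may still carry positive entries in columns of $C$, which is exactly why $C$ must be added back when bounding the mass of those rows. The remaining estimate $\lceil(r-k)/k\rceil-1<(r-k)/k$ is where the hypothesis on the number of unit columns is consumed.
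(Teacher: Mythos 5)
Your proof is correct and follows essentially the same route as the paper: both build the bipartite graph of positive entries, apply Hall's marriage theorem, and bound the mass of the rows in a deficient set by the column sums it can reach. The only difference is presentational --- the paper establishes the surplus condition $|N(R)| \ge |R| + \lceil (r-k)/k\rceil$ in the full graph on $[k]\times[n]$ and then deletes $C$, whereas you work in the graph restricted to $[n]\setminus C$ and argue by contradiction, adding $C$ back into the neighbourhood when counting.
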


\begin{proof}
The argument of Meunier and Su transfers directly to this setting.  We include it here for the reader's convenience.  We construct a bipartite graph $G$ on $[k]\times [n]$ whose edges have positive weights.  If $x_{ji}>0$ for some $j \in [k], i \in [n]$, we include the edge $(j,i)$ and give it weight $x_{ji}$.  Let $s_i$ be the sum of the $i$-th column of $X$ and $p$ be the sum of all entries of $X$.  The sum of each row must be $p/k\ge r/k$.

We use Hall's marriage theorem to find a matching in $G$ that covers $[k]$.  For a non-empty set $R \subset [k]$, the sum of the weights of edges incident to $R$ is at most the sum of weights of edges incident to its neighborhood $N(R) \subset [n]$.  This implies the following inequalities:
\begin{align*}
    \left(\frac{r}{k}\right)|R| \le \left(\frac{p}{k}\right)|R| & \le \sum_{i \in N(R)} s_i \le |N(R)| \\
    |R| + \left(\frac{r-k}{k}\right) \le \left(1 + \frac{r-k}{k}\right)|R| & \le |N(R)| \\
    |R| + \left\lceil \frac{r-k}{k}\right\rceil & \le |N(R)|.
\end{align*}

Therefore, if we remove any $\lceil (r-k)/k \rceil$ elements from $[n]$, we still have Hall's condition for a matching covering $[k]$.  This matching induces the injective function we were looking for.
\end{proof}

\section{Reduction to facet avoiding KKM covers}

We first rewrite the statement of \cref{thm:main} in terms of $k$-weakly KKM covers.

\begin{theorem}\label{thm:sparse-kkm}
Let $k \le n$ be positive integers.  For $i\in [n]$, let $(A^i_1,\ldots, A^i_k)$ be a $k$-tuple so that they form a $k$-weakly KKM set.  Then, there exists an injective function $\pi:[k] \to [n]$ such that $\bigcap_{j=1}^k A^{\pi(j)}_j \neq \emptyset$.
\end{theorem}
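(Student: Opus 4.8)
The plan is to deduce \cref{thm:sparse-kkm} from a ``facet-avoiding'' special case and then to prove that case by a topological argument combined with \cref{lem:modified-birkhoff}. Throughout, $N_\epsilon(A)$ denotes the $\epsilon$-neighbourhood of $A$ in $\Delta^{k-1}$, and $d(x,A)=\inf_{a\in A}\lVert x-a\rVert$.

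\emph{Step 1: reduction to facet-avoiding families.} Call a $k$-weakly KKM family \emph{facet-avoiding} if moreover $A^i_j\cap F_j=\emptyset$ for all $i,j$. I would first show it is enough to prove \cref{thm:sparse-kkm} for facet-avoiding families. Given an arbitrary $k$-weakly KKM family and a small $\rho>0$, I would modify it near each facet $F_j$: keep each $A^i_j$ only where the $j$-th barycentric coordinate is at least $\rho$, and re-cover the resulting $\rho$-collar of $F_j$ by radially pushing points in toward $F_j$ and distributing them among the $\tilde A^i_{j'}$ with $j'\neq j$; this is legitimate because the restriction of the family to $F_j$ (taking $\sigma=F_j$ in the definition) is itself $k$-weakly KKM. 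This produces a facet-avoiding $k$-weakly KKM family $(\tilde A^i_j)$ with each $\tilde A^i_j\subset N_{O(\rho)}(A^i_j)$; since there are finitely many injections $[k]\to[n]$, $\Delta^{k-1}$ is compact, and each $A^i_j$ is closed, a solution for the modified families as $\rho\to 0$ yields a solution for the original. I expect the careful bookkeeping of this collar modification, so that it stays $k$-weakly KKM, to be the most delicate part of the whole argument.

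\emph{Step 2: a matrix-valued map.} Now assume the family is facet-avoiding and fix $\epsilon>0$ smaller than $\min_{i,j} d(A^i_j,F_j)$. For $x\in\Delta^{k-1}$, $i\in[n]$, $j\in[k]$, let $\lambda^i_j(x)$ be the normalisation of $\max(0,\epsilon-d(x,A^i_j))$ over $j$ (set $\lambda^i_j(x)=1/k$ when guest $i$ is $\epsilon$-far from all of its sets), and let $\omega_i(x)\in[0,1]$ equal $1$ exactly when $x\in\bigcup_j A^i_j$ and decay continuously to $0$ as $x$ leaves $N_\epsilon(\bigcup_j A^i_j)$. Put $M_{ji}(x)=\omega_i(x)\lambda^i_j(x)$. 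Then $M(x)$ is a continuous $k\times n$ matrix with nonnegative entries, column sums $\omega_i(x)\le 1$, and $M_{ji}(x)>0$ forces $x\in N_\epsilon(A^i_j)$. Moreover the $k$-weakly KKM hypothesis forces at least $k$ guests to accept any given $x$: if $n-k+1$ guests rejected $x$, applying the definition to that set $C$ contradicts that $\bigcup_{i\in C}\bigcup_j A^i_j\ni x$. Hence at least $k$ columns of $M(x)$ sum to exactly $1$.

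\emph{Step 3: a balanced point, and conclusion.} Set $R_j(x)=\sum_i M_{ji}(x)$, $T(x)=\sum_j R_j(x)=\sum_i\omega_i(x)\ge k$, and $\psi_j(x)=R_j(x)-T(x)/k$, a continuous map $\psi\colon\Delta^{k-1}\to\R^k$ with $\sum_j\psi_j\equiv 0$. Since $\epsilon$ is small and the family is facet-avoiding, $\lambda^i_j\equiv 0$ on $F_j$, so $R_j\equiv 0$ on $F_j$ and thus $\psi_j(x)<0$ for $x\in F_j$. A degree computation now locates a zero of $\psi$; concretely, the closed sets $C_j=\{x:\psi_j(x)\ge 0\}$ form a KKM cover of $\Delta^{k-1}$ (for a face $\sigma=\mathrm{conv}(v_j:j\in S)$ and $x\in\sigma$ one has $\psi_l(x)<0$ for $l\notin S$, forcing some $\psi_j(x)\ge 0$ with $j\in S$ since $\sum_j\psi_j(x)=0$), so the classical KKM theorem gives $x^\ast\in\bigcap_j C_j$, where $\sum_j\psi_j(x^\ast)=0$ with all terms $\ge 0$ forces $\psi(x^\ast)=0$. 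At $x^\ast$, $M(x^\ast)$ has constant row sums $T(x^\ast)/k\ge 1$, column sums at most $1$, and at least $k$ columns summing to $1$, so \cref{lem:modified-birkhoff} yields an injective $\pi\colon[k]\to[n]$ with $M_{j\pi(j)}(x^\ast)>0$, hence $x^\ast\in N_\epsilon(A^{\pi(j)}_j)$ for every $j$. Letting $\epsilon\to 0$, passing to a subsequence on which $\pi$ is fixed and $x^\ast$ converges, and using closedness of the $A^i_j$, gives a point in $\bigcap_j A^{\pi(j)}_j$; with Step 1 this proves the theorem. Besides Step 1, the only point needing care is arranging $\omega_i$ and $\lambda^i_j$ continuously so that the boundary estimate $\psi_j|_{F_j}<0$ and the ``at least $k$ columns equal to $1$'' property hold at once.
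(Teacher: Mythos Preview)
Your proposal is correct and follows essentially the same route as the paper: reduce to facet-avoiding families, build a continuous $k\times n$ nonnegative matrix with column sums $\le 1$ and at least $k$ of them equal to $1$, locate a point with constant row sums by a topological argument, apply \cref{lem:modified-birkhoff}, and pass to a limit. The only notable difference is in Step~1, where the paper sidesteps your delicate collar surgery by instead \emph{enlarging} the simplex to $2\Delta^{k-1}$ and extending each $A^i_j$ into the annulus $2\Delta^{k-1}\setminus\Delta^{k-1}$ via the nearest-point retraction onto $\Delta^{k-1}$ (allowing label $j$ only when the smallest face containing the projected point includes $v_j$); this makes the extended family facet-avoiding on $2\Delta^{k-1}$ while leaving the sets unchanged on $\Delta^{k-1}$, so the reduction requires no $\rho\to 0$ approximation and the bookkeeping near low-dimensional faces that worried you disappears.
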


In the next section we will use a slightly stronger condition on the sets, namely that for each $i,j$ we have $A^i_j \in \Delta^{k-1}\setminus F_j$.  In other words, if there are empty pieces, person $i$ never lists those among their favorites.

\begin{lemma}\label{lem:reduction}
It is sufficient to prove \cref{thm:sparse-kkm} for families of sets so that $A^{i}_{j} \subset \Delta^{k-1}\setminus F_j $ for all $i, j$ to know that it holds in general.
\end{lemma}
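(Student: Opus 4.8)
The plan is a compactness argument: approximate the given family by ``hungrified'' families that avoid the facets, invoke the facet-avoiding case of \cref{thm:sparse-kkm} for each of them, and pass to a limit. For a set $G\subseteq\Delta^{k-1}$ write $\operatorname{dist}(x,G)=\inf_{y\in G}\|x-y\|$.

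Fix a sequence $\epsilon_m\downarrow 0$ with $\epsilon_m<1/k$ and put $\delta_m:=2k\epsilon_m$. Given a $k$-weakly KKM family $(A^i_j)_{i\in[n],\,j\in[k]}$, I would set
\[
B^i_j\ :=\ \bigl\{\,x\in\Delta^{k-1}\ :\ x_j\ge\epsilon_m\ \text{ and }\ \operatorname{dist}(x,A^i_j)\le\delta_m\,\bigr\}.
\]
Each $B^i_j$ is closed, the constraint $x_j\ge\epsilon_m$ puts $B^i_j$ inside $\Delta^{k-1}\setminus F_j$, and $B^i_j$ lies in the closed $\delta_m$-neighborhood of $A^i_j$ — the only feature needed for the limit. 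Because $\bigcup_{i\in C}B^i_j$ is obtained from $\bigcup_{i\in C}A^i_j$ by exactly the same operation, the $k$-weakly KKM property of $(B^i_j)$ reduces to a single claim: if $(G_1,\dots,G_k)$ is a KKM cover of $\Delta^{k-1}$ and $\tilde G_j=\{x:x_j\ge\epsilon_m,\ \operatorname{dist}(x,G_j)\le\delta_m\}$, then $(\tilde G_1,\dots,\tilde G_k)$ is again a KKM cover.

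This last claim is the heart of the matter and the step I expect to need the most care. Given a face $\sigma=\operatorname{conv}(v_l:l\in S)$ and $x\in\sigma$, let $T=\{l:x_l<\epsilon_m\}$; then $[k]\setminus S\subseteq T\subsetneq[k]$ (the latter since $\sum_l x_l=1>k\epsilon_m$). Let $\sigma_T=\operatorname{conv}(v_m:m\notin T)$ be the corresponding face of $\sigma$, and let $p(x)\in\sigma_T$ be $x$ with its $T$-coordinates deleted and the remaining ones rescaled to sum to $1$. One checks $\|x-p(x)\|\le 2\sum_{l\in T}x_l<\delta_m$. Since $(G_j)$ is a KKM cover we have $\sigma_T\subseteq\bigcup_{m\notin T}G_m$, so $p(x)\in G_{m_0}$ for some $m_0\notin T$; then $\operatorname{dist}(x,G_{m_0})\le\delta_m$ and $x_{m_0}\ge\epsilon_m$, hence $x\in\tilde G_{m_0}$ with $m_0\in S$, as required. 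The subtlety being handled is that $x$ can be close to several facets $F_l$ at once, so it may be expelled from every $\tilde G_l$ with $l\in T$; the coordinate deficit it has there is recovered by projecting onto the face $\sigma_T=\bigcap_{l\in T}F_l$ and using the KKM property of the original cover restricted to that face.

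With the construction in place the remainder is routine. Apply the facet-avoiding case of \cref{thm:sparse-kkm} to $(B^i_j)$ for each $m$ to obtain an injection $\pi_m:[k]\to[n]$ and a point $x_m\in\bigcap_{j=1}^k B^{\pi_m(j)}_j$. There are finitely many injections $[k]\to[n]$ and $\Delta^{k-1}$ is compact, so along a subsequence $\pi_m\equiv\pi$ is constant and $x_m\to x^\ast$. For each $j$ we have $\operatorname{dist}(x_m,A^{\pi(j)}_j)\le\delta_m\to 0$, hence $\operatorname{dist}(x^\ast,A^{\pi(j)}_j)=0$ and $x^\ast\in A^{\pi(j)}_j$ by closedness. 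Thus $x^\ast\in\bigcap_{j=1}^k A^{\pi(j)}_j$, which proves \cref{thm:sparse-kkm} in full generality.
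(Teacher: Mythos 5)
Your proof is correct, but it takes a genuinely different route from the paper's. The paper performs a one-shot geometric construction: it embeds $\Delta^{k-1}$ as a regular simplex centered at the origin inside the dilate $2\Delta^{k-1}$ and extends each $A^i_j$ outward into the annulus, placing a point $x$ of the annulus in the extension of $A^i_j$ only when its nearest point $p(x)\in\partial\Delta^{k-1}$ lies in $A^i_j$ \emph{and} $v_j$ belongs to the minimal face containing $p(x)$; the extended sets then avoid the facets of the larger simplex, the facet-avoiding case of \cref{thm:sparse-kkm} applies once, and a final ray argument shows the intersection point must lie in the original $\Delta^{k-1}$. You instead stay inside $\Delta^{k-1}$ and approximate: you cut each set away from $F_j$ by imposing $x_j\ge\epsilon_m$ while thickening by $\delta_m=2k\epsilon_m$ to restore the KKM property, apply the facet-avoiding case for each $m$, and pass to the limit using the finiteness of the set of injections $[k]\to[n]$, compactness, and the closedness of the $A^i_j$ (which is part of the hypotheses, so legitimately available). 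Both arguments hinge on the same local idea --- replace a boundary-near point by its projection onto the face spanned by its large coordinates and invoke the KKM property of the original cover on that subface --- but they package it differently: the paper avoids any limiting argument in the reduction at the cost of working on an enlarged simplex, while yours needs the $\epsilon_m\to 0$ compactness step but never leaves $\Delta^{k-1}$ and makes the reduction self-contained. I checked the points you flagged as delicate: the estimate $\|x-p(x)\|\le 2\sum_{l\in T}x_l<2k\epsilon_m$ (exact in the $\ell_1$ norm, an upper bound in the Euclidean one), the inclusions $[k]\setminus S\subseteq T\subsetneq[k]$, and the identity $\bigcup_{i\in C}B^i_j=\{x:x_j\ge\epsilon_m,\ \operatorname{dist}(x,\bigcup_{i\in C}A^i_j)\le\delta_m\}$, which holds because the union is finite; so the $k$-weakly KKM property of the $(B^i_j)$ does reduce to your single-cover claim, and the argument goes through.
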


\begin{proof}
Let $k \le n$ be positive integers and that $(A^i_1, \ldots, A^i_k)$ are $k$-tuples of subsets of $\Delta^{k-1}$ so that they form a $k$-weakly KKM set.

Consider $\Delta^{k-1}$ to be embedded linearly in $\rr^{k-1}$ so that it is a regular simplex centered at the origin. We now construct a set of $k$-tuples $(Y^i_1,\ldots, Y^i_k)$ that form a $k$-weakly cover of $2\Delta^{k-1}$, a scaled copy of $\Delta^{k-1}$ from the origin.  We start by forming some intermediary sets $X^i_j$.  

For $x \in \Delta^{k-1}$, we take $x \in X^i_j$ if and only if $x \in A^{i}_j$.  For $x \in (2\Delta^{k-1})\setminus \Delta^{k-1}$, let $p(x)$ be the closest point of $\Delta^{k-1}$ to $x$.  By convexity and compactness of $\Delta^{k-1}$, the point $p(x)$ is well defined and unique.  Moreover, $p(x) \in \partial \Delta^{k-1}$.  Let $\sigma(x)$ be the smallest face that contains $p(x)$.  We include $x$ in $X^i_j$ if and only if $p(x) \in A^{i}_j$ and $v_j \in \sigma(x)$.  Finally, we let $Y^i_j$ be the closure of $X^i_j$.  See \cref{fig:reduction} for an illustration of the construction.

Since the union of every $n-k+1$ of the $k$-tuples $(A^i_1, \ldots, A^i_k)$ covers $\Delta^{k-1}$, the union of every $n-k+1$ of the $k$-tuples $(Y^i_1,\ldots, Y^i_k)$ covers $2\Delta^{k-1}$.  Note that the $k$-weakly KKM condition is necessary so that every point in $(2\Delta^{k-1})\setminus \Delta^{k-1}$ is covered east one $X^i_j$ from the union.  By the construction of $Y^i_j$ in $(2\Delta^{k-1})\setminus \Delta^{k-1}$, we have that $Y^i_j$ does not intersect the facet of $2\Delta^{k-1}$ opposite to vertex $v_j$.  We also have $Y^i_j \cap \Delta^{k-1} = A^i_j$ for all $i,j$.

Therefore, we can find an injective $\pi: [k] \to [n]$ so that $\bigcap_{j=1}^k Y^{\pi(j)}_j \neq \emptyset$.  Let $x$ be a point in this intersection.  If $x \in \Delta^{k-1}$, we have $x \in \bigcap_{j=1}^k A^{\pi(j)}_j$ and we are done.  If $x \not\in \Delta^{k-1}$, we can extend the ray starting at $p(x)$ in the direction of $x$ until it hits a point $x' \in \partial (2 \Delta^{k-1})$.  The points $x'$ and $x$ are covered by the same sets $Y^i_j$.  However, since $x'$ is in the boundary of $\Delta^{k-1}$, there is at least one value of $j$ so that $x' \not\in Y^i_j$ for all $i\in [n]$, so it was impossible to find an injective function as we claimed.
\end{proof}

\begin{figure}
    \centering
    \includegraphics{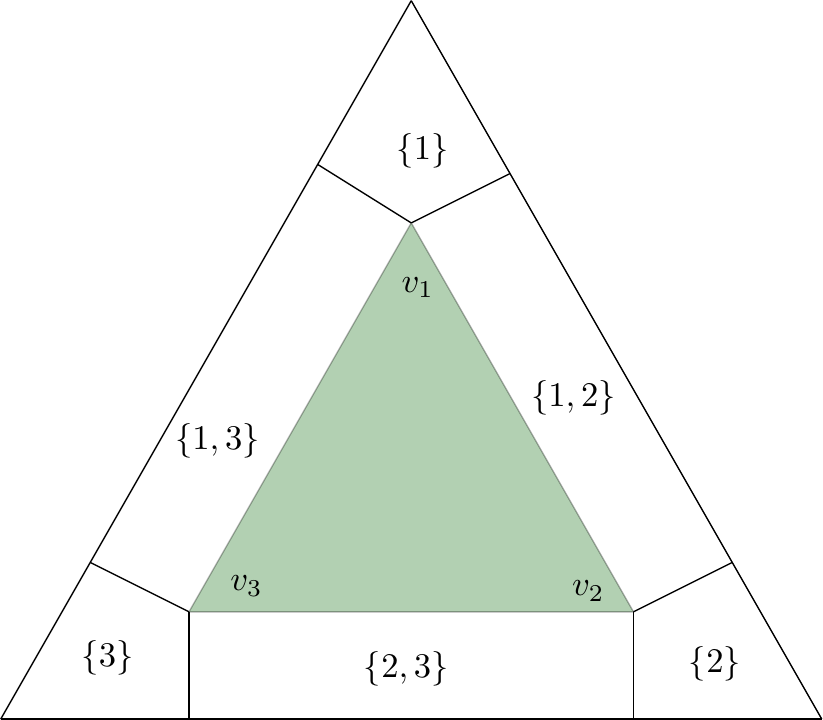}
    \caption{An illustration of the auxiliary subdivision of a simplex used in the construction of the sets $Y^i_j$.  The set $(2\Delta^{k-1})\setminus \Delta^{k-1}$ is divided into regions, each with a set $S$ assigned.  We only allow $x \in Y^i_j$ if $x$ is in the closure of a region for which $j \in S$.  Each facet of $2\Delta^{k-1}$ completely misses one label.}
    \label{fig:reduction}
\end{figure}

\section{main proofs}\label{sec:proofs}

\begin{proof}[Proof of \cref{thm:sparse-kkm}]
By \cref{lem:reduction}, we may assume that $A^i_j \subset \Delta^{k-1}\setminus F_j$ for all $j$.  Let $\tau>0$.  For each $A^i_j$, let $A^i_j(\tau) = \{x \in \Delta^{k-1}: \operatorname{dist}(x,A^i_j)< \tau\}$. This is an open set, and if $\tau$ is sufficiently small we still have $A^i_j(\tau) \subset \Delta^{k-1}\setminus F_j$ for all $i \in [n], j\in [k]$.  Let $B^i_j = \Delta^{k-1}\setminus A^i_j(\tau)$.  Given $\varepsilon>0$ and $x\in \Delta^{k-1}$, we define the $k \times n$ matrix $M(x,\varepsilon)$ with entries $m_{ji}$ given by

\[
m_{ji} = \frac{1}{\max\{\varepsilon, \sum_{h=1}^k \operatorname{dist}(x,B^i_h)\}} \operatorname{dist}(x,B^i_j). 
\]

All the entries are non-negative.  Since the family of $k$-tuples is $k$-weakly KKM, at most $n-k$ columns of $M(x,\varepsilon)$ are zero.

Consider $\Delta^{k-1}$ identified with its natural embedding in $\rr^k$, as the set of vectors with non-negative entries whose sum is $1$.  If we denote by $\bar{u}$ the $n\times 1$ vector with all entries equal to $1$, we can define the function
\begin{align*}
    f_{\varepsilon}: \Delta^{k-1} & \to \Delta^{k-1} \\
    x & \mapsto \frac{1}{\|M(x,\varepsilon) \bar{u}\|_1}M(x,\varepsilon) \bar{u}.
\end{align*}

The function $f_{\varepsilon}$ gives us the sums of rows of $M(x,\varepsilon)$ normalized by the sum of all entries in $M(x,\varepsilon)$.  The function $f_{\varepsilon}$ is continuous.  Moreover, since each $A^i_j(\tau)$ is a subset of $\Delta^{k-1}\setminus F_j$, for every face $\sigma$ of $\Delta^{k-1}$ we have $f_{\varepsilon}(\sigma) \subset \sigma$.  Therefore, $f_{\varepsilon}$ is of degree one on the boundary and must be surjective.  In particular, there exists $x_{\varepsilon} \in \Delta^{k-1}$ such that 
$f_{\varepsilon}(x_{\varepsilon}) = (1/k, \ldots, 1/k)^T$.

Now we take a sequence $(\varepsilon_m)_{m \ge 1}$ of positive real numbers that converges to $0$.  By the compactness of $\Delta^{k-1}$ we may assume without loss of generality that the sequence $(x_{\varepsilon_m})_{m \ge 1}$ converges to a point $x_0^{\tau} \in \Delta^{k-1}$.  For $i \in [n]$, if $x_0^{\tau} \in \bigcup_{j=1}^k A^i_j(\tau)$, then for all sufficiently small values of $\varepsilon$, the $i$-th column of $M(x_0^{\tau},\varepsilon)$ will add to one.  If, on the other hand, $x_0^{\tau} \not\in \bigcup_{j=1}^k A^i_j(\tau)$, then the $i$-th column of $M(x_0^{\tau},\varepsilon)$ will always be zero.  Let $\varepsilon_0$ be a value so that all the columns of $M(x_0^{\tau}, \varepsilon_0)$ sum to zero or one.

We assume without loss of generality that the matrices $M(x_{\varepsilon_m}, \varepsilon_m)$ converge to some matrix $M$.  Note that $M$ may be different from $M(x_0^{\tau}, \varepsilon_0)$.  If $x_0^{\tau} \in \bigcup_{j=1}^k A^i_j(\tau)$, then the $i$-th column of $M$ and of $M(x_0^{\tau}, \varepsilon_0)$ are equal.  Therefore, at least $k$ columns of $M$ sum to one.  Suppose the entry $(j,i)$ of $M$ is positive.  We have $\operatorname{dist} (x_{\varepsilon_m}, B^i_j)>0$ for infinitely many $m$ since otherwise the limit of the $(j,i)$ entry of $M(x_{\varepsilon_m}, \varepsilon_m)$ would converge to zero.  We thus have $x_{\varepsilon_m}\in A^i_j(\tau)$ for infinitely many $m$, which implies $\operatorname{dist}(x_0^{\tau}, A^i_j) \le \tau$.

%The only way in which a column of $M$ may not add to one or zero is if $x_0 \in \partial(\bigcup_{j=k}^n A^i_j (\tau)) \setminus (\bigcup_{j=1}^k A^i_j (\tau))$.  This is because the sequence $(x_{\varepsilon_m})_{m \ge 1}$ could be contained in one of the sets $A^i_j$, but $x_0$ could be in none of those sets as we vary $j$.  This means that if the entry in $(j,i)$ of $M$ is positive, then $\operatorname{dist}(x_0, A^i_j) \le \tau$.

By the convergence of the matrices $M(x_{\varepsilon_m}, \varepsilon_m)$, we know that $M$ has only non-negative entries, the sum of its rows is constant, and the sum of its columns is bounded above by $1$.  By the coincidence of columns of $M$ and $M(x_0^{\tau}, \varepsilon_0)$, at least $k$ columns of $M$ have sum equal to $1$.

We can therefore apply \cref{lem:modified-birkhoff} and obtain an injective function $\pi^{\tau}:[k]\to [n]$ that selects positive entries of $M$.  In particular, $\operatorname{dist}(x_0^{\tau}, A^{\pi^{\tau}(j)}_j) \le \tau$ for all $j \in [k]$.  As $\tau \to 0$, we may assume without loss of generality that $\pi^{\tau}$ remains constant and is equal to some injective function $\pi:[k] \to [n]$, and $x_0^{\tau}$ converges to some $x_0 \in \Delta^{k-1}$.  Observe that, since the sets $A^i_j$ are closed, $x_0 \in A^{\pi(j)}_j$ for all $j \in [k]$.
\end{proof}

The rest of the proofs are modifications of the proof of \cref{thm:main}.  We start with the secretive version.

\begin{proof}[Proof of \cref{thm:secret-cake}]
By \cref{lem:reduction}, we may assume that $A^i_j \subset \Delta^{k-1}\setminus F_j$ for all $j$ for $i>1$.  We assume that person $1$ is Alice.  Since we don't know her preferences, we provisionally use the $k$-tuple $(B^1_1,\ldots, B^1_k)$ where $B^1_j = F_j$ for each $j \in [k]$ to construct our functions.  We follow the same proof of \cref{thm:main} up until the construction of $M$. 

If the entry $m_{ji}$ of $M$ is not zero, it means $x_0$ is in $A^i_j(\tau)$, so $x_0 \not\in F_j$.  Since none of the rows in $M$ is zero, the point $x_0$ must be in the interior of $\Delta^{k-1}$.  Therefore, the first column has only positive coordinates and coincides with the first column of $M(x_0, \varepsilon_0)$.  Take $j\in[k]$.  We know the entry in row $j$ and column $1$ of $M$ is positive.  We may choose an injective function $\pi_j:[k] \to [n]$ such that $\pi_j(j)=1$ from \cref{lem:modified-birkhoff}.  As $\tau\to 0$ we can assume without loss of generality that each function $\pi_j$ is constant and that $x_0$ converges, finishing the proof.
\end{proof}

A similar approach gives us \cref{coro:one-fewer}.  In the proof of \cref{lem:modified-birkhoff}, if we instead ask that $k+1 \le n$ and $k+1$ columns add to $1$, we need to append at least one row.  Moreover, since $p/k \ge (k+1)/k > 1$, all the entries of this new row are positive.  We can then impose the condition $\pi(k+1) = i$ for any $i \in [n]$ in the function we construct, which is equivalent to excluding person $i$.  \cref{coro:one-fewer} is also a direct corollary of \cref{thm:people-missing}, which we prove now.

\begin{proof}[Proof of \cref{thm:people-missing}]
We follow the same proof of \cref{thm:main} and use \cref{lem:modified-su} instead of \cref{lem:modified-birkhoff}.  As there are only $\binom{n}{\lceil (r-k)/k\rceil}$ possible sets of guests to exclude, the convergence arguments when $\tau \to 0$ still hold.
\end{proof}

\begin{proof}[Proof of \cref{thm:rent}]
The approach is similar to the one of \cref{thm:main}, except we use $k$-weakly dual KKM families of $k$-tuples instead.

First recall that if $(A_1, \ldots, A_k)$ is a dual-KKM cover of $\Delta^{k-1}$, then every point in the relative interior of the facet $F_j$ can only be covered by $A_j$.  Therefore, $A_j \cap \partial \Delta^{k-1} = F_j$.

Given a proper face $\sigma$ of $\Delta^{k-1}$, there must be an index $j \in [n]$ so that $v_j \in \sigma$ and $v_{j+1} \not\in \sigma$, where the sum is taken modulo $n$.  Therefore, $\sigma \subset F_{j+1}$.  This implies that if $(A_1, \ldots, A_k)$ is a dual KKM cover, we can shift the sets by one and obtain $(A_2, \ldots, A_k, A_1)$, which is now a KKM cover of $\Delta^{k-1}$.  The same shift means that the set of $k$-tuples $(A^i_2,\ldots, A^i_k, A^i_1)$ forms a $k$-weakly KKM family.  We apply \cref{thm:sparse-kkm} to this family and we are done.
\end{proof}

\section{Acknowledgments}

The author thanks Francis Su for his valuable comments on this work and the two anonymous referees for their careful revision.

% \bib, bibdiv, biblist are defined by the amsrefs package.
\begin{bibdiv}
\begin{biblist}

\bib{Aharoni2020}{article}{
      author={Aharoni, Ron},
      author={Berger, Eli},
      author={Briggs, Joseph},
      author={Segal-Halevi, Erel},
      author={Zerbib, Shira},
       title={{Fractionally balanced hypergraphs and rainbow KKM theorems}},
        date={2020},
     journal={arXiv preprint arXiv:2011.01053},
      volume={math.CO},
        note={To appear in Combinatorica},
}

\bib{Asada:2018ix}{article}{
      author={Asada, Megumi},
      author={Frick, Florian},
      author={Pisharody, Vivek},
      author={Polevy, Maxwell},
      author={Stoner, David},
      author={Tsang, Ling~Hei},
      author={Wellner, Zoe},
       title={{Fair Division and Generalizations of Sperner- and KKM-type
  Results}},
        date={2018},
     journal={SIAM J. Discrete Math.},
      volume={32},
      number={1},
       pages={591\ndash 610},
}

\bib{Avvakumov2021}{article}{
      author={Avvakumov, Sergey},
      author={Karasev, Roman},
       title={Envy-free division using mapping degree},
        date={2021},
        ISSN={0025-5793},
     journal={Mathematika},
      volume={67},
      number={1},
       pages={36\ndash 53},
         url={https://doi.org/10.1112/mtk.12059},
}

\bib{Barany1982}{article}{
      author={B\'{a}r\'{a}ny, Imre},
       title={A generalization of {C}arath\'{e}odory's theorem},
        date={1982},
        ISSN={0012-365X},
     journal={Discrete Math.},
      volume={40},
      number={2-3},
       pages={141\ndash 152},
         url={https://doi.org/10.1016/0012-365X(82)90115-7},
}

\bib{Barbanel2005}{book}{
      author={Barbanel, Julius~B.},
       title={The geometry of efficient fair division},
   publisher={Cambridge University Press, Cambridge},
        date={2005},
        ISBN={0-521-84248-4},
         url={https://doi.org/10.1017/CBO9780511546679},
        note={With an introduction by Alan D. Taylor},
}

\bib{Birkhoff1946}{article}{
      author={Birkhoff, Garrett},
       title={Tres observaciones sobre el algebra lineal},
        date={1946},
     journal={Univ. Nac. Tucuman, Ser. A},
      volume={5},
       pages={147\ndash 154},
}

\bib{Brams:1996wt}{book}{
      author={Brams, Steven~J.},
      author={Taylor, Alan~D.},
       title={{Fair Division: From cake-cutting to dispute resolution}},
      series={Cambridge University Press},
   publisher={Cambridge University Press},
        date={1996},
}

\bib{Cloutier2010}{article}{
      author={Cloutier, John},
      author={Nyman, Kathryn~L.},
      author={Su, Francis~Edward},
       title={Two-player envy-free multi-cake division},
        date={2010},
        ISSN={0165-4896},
     journal={Math. Social Sci.},
      volume={59},
      number={1},
       pages={26\ndash 37},
         url={https://doi.org/10.1016/j.mathsocsci.2009.09.002},
}

\bib{Frick2019}{article}{
      author={Frick, Florian},
      author={Zerbib, Shira},
       title={Colorful coverings of polytopes and piercing numbers of colorful
  {$d$}-intervals},
        date={2019},
        ISSN={0209-9683},
     journal={Combinatorica},
      volume={39},
      number={3},
       pages={627\ndash 637},
         url={https://doi.org/10.1007/s00493-018-3891-1},
}

\bib{Gale1984}{article}{
      author={Gale, David},
       title={Equilibrium in a discrete exchange economy with money},
        date={1984},
        ISSN={0020-7276},
     journal={Internat. J. Game Theory},
      volume={13},
      number={1},
       pages={61\ndash 64},
         url={https://doi.org/10.1007/BF01769865},
}

\bib{Holmsen:2016fo}{article}{
      author={Holmsen, Andreas~F.},
       title={{The intersection of a matroid and an oriented matroid}},
        date={2016},
     journal={Advances in Mathematics},
      volume={290},
       pages={1\ndash 14},
}

\bib{Knaster:1929vi}{article}{
      author={Knaster, Bronisław},
      author={Kuratowski, Kazimierz},
      author={Mazurkiewicz, Stefan},
       title={{Ein Beweis des Fixpunktsatzes f{\"u}r n-dimensionale Simplexe}},
        date={1929},
     journal={Fundamenta Mathematicae},
      volume={14},
      number={1},
       pages={132\ndash 137},
}

\bib{Komiya1994}{article}{
      author={Komiya, Hidetoshi},
       title={A simple proof of {K}-{K}-{M}-{S} theorem},
        date={1994},
        ISSN={0938-2259},
     journal={Econom. Theory},
      volume={4},
      number={3},
       pages={463\ndash 466},
         url={https://doi.org/10.1007/BF01215383},
}

\bib{Meunier2019a}{article}{
      author={Meunier, Fr\'{e}d\'{e}ric},
      author={Su, Francis~Edward},
       title={Multilabeled versions of {S}perner's and {F}an's lemmas and
  applications},
        date={2019},
     journal={SIAM J. Appl. Algebra Geom.},
      volume={3},
      number={3},
       pages={391\ndash 411},
         url={https://doi.org/10.1137/18M1192548},
}

\bib{Meunier2019}{article}{
      author={Meunier, Fr\'{e}d\'{e}ric},
      author={Zerbib, Shira},
       title={Envy-free cake division without assuming the players prefer
  nonempty pieces},
        date={2019},
        ISSN={0021-2172},
     journal={Israel J. Math.},
      volume={234},
      number={2},
       pages={907\ndash 925},
         url={https://doi.org/10.1007/s11856-019-1939-6},
}

\bib{McGinnis2021}{article}{
      author={McGinnis, Daniel},
      author={Zerbib, Shira},
       title={{A Sparse colorful polytopal KKM Theorem}},
        date={2021},
     journal={arXiv preprint arXiv:2112.14421},
      volume={math.CO},
}

\bib{Nyman2020}{article}{
      author={Nyman, Kathryn},
      author={Su, Francis~Edward},
      author={Zerbib, Shira},
       title={Fair division with multiple pieces},
        date={2020},
        ISSN={0166-218X},
     journal={Discrete Appl. Math.},
      volume={283},
       pages={115\ndash 122},
         url={https://doi.org/10.1016/j.dam.2019.12.018},
}

\bib{procaccia2015cake}{incollection}{
      author={Procaccia, Ariel~D.},
       title={Cake cutting algorithms},
        date={2016},
   booktitle={Handbook of computational social choice},
   publisher={Cambridge Univ. Press, New York},
       pages={311\ndash 329},
}

\bib{SegalHalevi2021}{article}{
      author={Segal-Halevi, Erel},
       title={Fair multi-cake cutting},
        date={2021},
        ISSN={0166-218X},
     journal={Discrete Appl. Math.},
      volume={291},
       pages={15\ndash 35},
         url={https://doi.org/10.1016/j.dam.2020.10.011},
}

\bib{Soberon:2018gn}{article}{
      author={Sober\'on, Pablo},
       title={{Robust Tverberg and Colourful Carath\'{e}odory Results via
  Random Choice}},
        date={2018},
        ISSN={0963-5483},
     journal={Combin. Probab. Comput.},
      volume={27},
      number={3},
       pages={427\ndash 440},
}

\bib{Stromquist1980}{article}{
      author={Stromquist, Walter},
       title={How to cut a cake fairly},
        date={1980},
        ISSN={0002-9890},
     journal={Amer. Math. Monthly},
      volume={87},
      number={8},
       pages={640\ndash 644},
         url={https://doi.org/10.2307/2320951},
}

\bib{Weller:1985kj}{article}{
      author={Weller, Dietrich},
       title={{Fair division of a measurable space}},
        date={1985},
     journal={J. Math. Econom.},
      volume={14},
      number={1},
       pages={5\ndash 17},
  url={https://www.sciencedirect.com/science/article/pii/0304406885900230},
}

\bib{Woodall1980}{article}{
      author={Woodall, Douglas~R.},
       title={Dividing a cake fairly},
        date={1980},
        ISSN={0022-247X},
     journal={J. Math. Anal. Appl.},
      volume={78},
      number={1},
       pages={233\ndash 247},
         url={https://doi.org/10.1016/0022-247X(80)90225-5},
}

\end{biblist}
\end{bibdiv}

\end{document}